
\documentclass[11pt]{article}

\usepackage[top=25mm, bottom=25mm, left=31mm, right=31mm]{geometry}
\usepackage{amsmath,amsthm,amssymb}
\usepackage{cases}
\usepackage[document]{ragged2e}

\newenvironment{theorem}[2][Theorem]{\begin{trivlist}
\item[\hskip \labelsep {\bfseries #1}\hskip \labelsep {\bfseries #2.}]}{\end{trivlist}}
\newenvironment{lemma}[2][Lemma]{\begin{trivlist}
\item[\hskip \labelsep {\bfseries #1}\hskip \labelsep {\bfseries #2.}]}{\end{trivlist}}

\newenvironment{remark}[2][Remark]{\begin{trivlist}
\item[\hskip \labelsep {\bfseries #1}\hskip \labelsep {\bfseries #2.}]}{\end{trivlist}}
\newenvironment{acknowledgement}[2][Acknowledgement]{\begin{trivlist}
\item[\hskip \labelsep {\bfseries #1}\hskip \labelsep {\bfseries #2.}]}{\end{trivlist}}
\newenvironment{$proof$}[2][$Proof$]{\begin{trivlist}
\item[\hskip \labelsep {\bfseries #1}\hskip \labelsep {\bfseries #2.}]}{\end{trivlist}}

\newenvironment{corollary}[2][Corollary]{\begin{trivlist}
\item[\hskip \labelsep {\bfseries #1}\hskip \labelsep {\bfseries #2.}]}{\end{trivlist}}
\begin{document}


\title{On states of quantum theory}

\author{Amir R. Arab$^{1, 2}$\\$^{1}$Steklov Mathematical Institute of Russian Academy of
Sciences,\\ Gubkina str., 8, Moscow 119991, Russian Federation;\\
$^{2}$Moscow Institute of Physics and Technology, 9 Institutskiy per.,\\
Dolgoprudny, Moscow Region, 141701, Russian Federation\\
\texttt{amir.arab@phystech.edu}\\}

\maketitle

\begin{abstract}
\begin{justify}
In this paper the generalized quantum states, i.e. positive and normalized linear functionals on $C^{*}$-algebras, are studied. Firstly, we study normal states, i.e. states which are represented by density operators, and singular states, i.e. states can not be represented by density operators. It is given an approach to the resolution of bounded linear functionals into quantum states by applying the GNS construction, i.e. the fundamental result of Gelfand, Neumark and Segal on the representation theory of $C^{*}$-algebras, and theory of projections. Secondly, it is given an application in quantum information theory. We study covariant cloners, i.e. quantum channels in the Heisenberg and the Schr\"{o}dinger pictures which are covariant by shifting, and it is shown that the optimal cloners can not have a singular component. Finally, we discuss on the representation of pure states in the sense of the Gelfand-Pettis integral. We also give physical interpretations and examples in different sections of the present work.
\end{justify}
\end{abstract}
\vspace{5mm} 
\textbf{Keywords:}\, States on $C^{*}-$algebras; GNS construction; Covariant cloning; Gelfand-Pettis integral
\section{Introduction}
\begin{justify}
A generalized quantum state is a state on a $C^{*}-$algebra that generalizes the notion of a density matrix in quantum mechanics. A density matrix represents a quantum state, both a mixed state and a pure state. A density matrix in turn generalizes the notion of a state vector, which only represents a pure state. In the $C^{*}-$algebraic formulation of quantum mechanics, states in the previous sense correspond to physical states, i.e. mappings from physical observables to their expected measurement outcome.\par
In 1943, Gelfand and Neumark deﬁned what is now called a $C^{*}-$algebra and proved the basic theorem that every $C^{*}-$algebra is isomorphic to the norm-closed $*$-algebra of operators on a Hilbert space. Their paper [1] also contained the rudiments of what is now called the GNS construction, connecting states to representations. In its present form, this construction is due to Segal [2], a great admirer of von Neumann, who generalized von Neumann’s idea of a state as a positive normalized linear functional from $B(H)$ to arbitrary $C^{*}-$algebras. Moreover, Segal returned to von Neumann’s motivation of relating operator algebras to quantum mechanics.\par
Starting with Segal in 1947, $C^{*}-$algebras have become an important tool in mathematical physics, where traditionally most applications have been to quantum systems with inﬁnitely many degrees of freedom, such as quantum statistical mechanics in inﬁnite volume and quantum ﬁeld theory. Many important examples of states in quantum physics (for instance, the Gibbs states for free Bose gas) are normal states, i.e. they are of the form $\rho(a)= \mathrm{tr}(\Lambda a)$ with a certain density matrix $\Lambda$.\par
In 1966, Dixmier proved that there exist states on $B(H)$, the bounded linear operators on a separable complex Hilbert space, which are not normal [3]. Dixmier states have the further property to be ``singular'', i.e. they vanish on the finite rank operators. Quantum states have been classified in a wide range of studies, for instance see [4-8]. We are motivated by two papers [9, 10] which studied states on $B(H)$ in the classical case by considering finitely additive measures and ultrafilters, and provided some results on representation of them. In the present paper, states on $B(H)$ are generalized to states on a $C^{*}-$algebra.

This paper is devoted to the description of generalization of quantum states, applying this treatment in quantum mechanics, quantum field theory, quantum dynamics and quantum information theory. This paper is organized as follows. In Section 2, we provide basic definitions and theorems on weak topologies on $B(H)$, $C^{*}-$algebras, von Neumann algebras, and representation of $C^{*}-$algebras. In Section 3, we introduce different types of states and formulate a fundamental theorem on normal states. We also address the formulation of quantum mechanics in terms of worlds that is verified by realization of singular states. In Section 4, we introduce the GNS construction and universal representation of a $C^{*}-$algebra. We also provide an approach to quantum field theory. In Section 5, we prove some results on extension of bounded linear functionals, topological properties with respect to weak topologies on $B(H)$, relation between a representation of a $C^{*}-$algebra and its universal representation, and classification of normal and singular states by a central projection. We also deal with generalized states in quantum dynamics. In Section 6, we introduce phase space, Weyl operators and covariant cloning, and by applying the notion of fidelity it is shown that the optimal cloner can not have a singular component. And finally in Section 7, we have a short discussion on the representation of pure states in the sense of the Gelfand-Pettis integral. We also discuss on the process of measurement in the formalism of quantum mechanics in terms of worlds.
\end{justify}
\section{Preliminaries}
\begin{justify}
Let $B(H)$ be the set of bounded linear operators on a complex Hilbert space $H$. The induced topology on $B(H)$ by the family of semi-norms,\\
$(i)\; p_{x}(T)=\|T(x)\|$ is called the \emph{strong-operator topology }(SOT); \\
$(ii)\; p_{x, y}(T)=|\langle T(x), y\rangle|$, is called the \emph{weak-operator topology }(WOT); \\
$(iii)\; p_{\{x_{n}, y_{n}\}}(T)=|\sum_{n}\langle T(x_{n}), y_{n}\rangle|$, where $\sum_{n}(\|x_{n}\|^{2}+\|y_{n}\|^{2})<\infty$ is called the \emph{ultraweak topology}.\\
The SOT and the ultraweak topology are stronger than the WOT. On every norm-bounded set the WOT and the ultraweak topology are the same, particularly the unit ball is compact in both topologies.\par
We assume that $\textsf{A}$ is an associative algebra over the field $\mathbb{C}$ with the unit element $1_{\textsf{A}}$. Let $\|.\|$ be a norm on $\textsf{A}$, as a vector space on $\mathbb{C}$, and $*:\,$\textsf{A}$\longrightarrow $\textsf{A}$, a\longmapsto a^{*} $, be an antilinear map. Then $(\textsf{A},\|.\|,* )$ is called a \emph{$C^{*}-$algebra}, if $(\textsf{A},\|.\|)$ is complete and for each $a, b$ in $\textsf{A}$, we have:\\
$1.\quad a^{**}=a$\qquad (\emph{involution}),\\
$2.\quad (ab)^{*}=b^{*}a^{*}$\qquad (\emph{anti-homomorphism}),\\
$3.\quad \|ab\|\leq \|a\|\|b\|$\qquad (\emph{submultiplicativity}),\\
$4.\quad \|a^{*}\|=\|a\|$\qquad (\emph{isometry}),\\
$5.\quad \|a^{*}a\|=\|a\|^{2}$\qquad (\emph{$C^{*}-$property}).\par
$B(H)$ with the usual operator norm is a $C^{*}-$algebra. An algebra $\textsf{A}$ is called \emph{self-adjoint} if for each $a\in \textsf{A}$, $a^{*}\in \textsf{A}$ is so. A \emph{von Neumann algebra} is a self-adjoint algebra $\textsf{A}\subseteq B(H)$ that is closed with respect to the WOT. Every von Neumann algebra is a $C^{*}-$ algebra, but the converse is not true, for instance the $C^{*}-$ algebra $K(H)$ of compact operators on an infinite dimensional Hilbert space $H$ is not a von Neumann algebra. Let $\textsf{A}^{-}$ be the strong- (equivalently weak-)operator closure of a $C^{*}-$algebra $\textsf{A}\subseteq B(H)$. According to the Kaplansky density theorem, $(\textsf{A})_{1}$, the unit ball of $\textsf{A}$, is strong-operator dense in $(\textsf{A}^{-})_{1}$, the unit ball of $\textsf{A}^{-}$. The \emph{predual} of a von Neumann algebra $\textsf{A}$ is the linear space of linear functionals on $\textsf{A}$ that are weak-operator continuous on $($\textsf{A}$)_{1}$ and we denote it by $(\textsf{A})_{*}$.\par
A \emph{representation} of a $C^{*}-$algebra $\textsf{A}$ on $H$ is a $*-$homomorphism $\pi:\textsf{A}\longrightarrow B(H)$ (i.e. homomorphism with this property $\pi(a^{*})=\pi(a)^{*}$ for each $a$ in $\textsf{A}$). If, further, $\pi$ is one-to-one, it is called $*-isomorphism$ or \emph{faithful} representation. Every $*-$homomorphism is norm decreasing and then is continuous. The image of a $C^{*}-$algebra under a $*-$homomorphism is a $C^{*}-$algebra, specially $\pi(\textsf{A})$ is a $C^{*}-$subalgebra of $B(H)$. If there is a vector $x$ in $H$ for which the linear subspace
\begin{align*}
\pi(\textsf{A})x=\{\pi(a)x : a\in \textsf{A}\},
\end{align*}
is dense in $H$, $\pi$ is called a \emph{cyclic} representation, and $x$ is termed as a \emph{cyclic vector} for $\pi$. Two representations $\pi_{1}:\textsf{A}\longrightarrow B(H_{1})$ and $\pi_{2}:\textsf{A}\longrightarrow B(H_{2})$ are \emph{equivalent}, if there is a unitary operator $U:H_{1}\longrightarrow H_{2}$ such that $U\pi_{1}(a)=\pi_{2}(a)U$ where $a\in \textsf{A}$.
\section{States}
A \emph{state} on a $C^{*}-$algebra $\textsf{A}$ is a linear functional $\rho: \textsf{A} \longrightarrow \mathbb{C}$ such that $\rho(a^{*}a)\geq 0$ for each $a$ in $\textsf{A}$ (\emph{positivity}) and $\rho(1_{\textsf{A}})=1$ (\emph{normalization}). For example, when $\textsf{A}\subseteq B(H)$ then every unit vector $x\in H$ defines a state $\omega_{x}(T):=\langle T(x), x\rangle$ that is called a \emph{vector state}. A state $\rho$ on $\textsf{A}$ is called \emph{faithful} if $\rho(a^{*}a)> 0$ unless $a=0$. The set of states on $\textsf{A}$ is called the \emph{state space} of $\textsf{A}$ and we denote it by $S(\textsf{A})$. The state space is a compact (with respect to the weak$-^{*}$ topology) convex set. For example the state space of the algebra of $2\times 2$ complex matrices is the Bloch sphere (the unit ball in $\mathbb{R}^{3}$). A state $\rho$ on $\textsf{A}$ is called \emph{pure (extreme)} if it is not a convex combination of other states, i.e. if $\rho=t\rho_{1}+(1-t)\rho_{2}$ for some $0< t< 1$, then $\rho=\rho_{1}=\rho_{2}$. In fact pure states are elements of the boundary of $S(\textsf{A})$. A state that is not pure is called \emph{mixed}. A state $\rho$ on a $C^{*}-$algebra $\textsf{A}$ is called \emph{normal}, if $a_{i}\nearrow a$ then $\rho(a_{i})\nearrow \rho (a)$, i.e. if for every increasing net $\{a_{i}\}_{i\in I}$ of positive elements of $\textsf{A}$ (i.e. elements may be written in the form $b^{*}b$), we have $\rho$(sup$_{i\in I}\,a_{i})$=sup$_{i\in I}\,\rho(a_{i})$. For example with $\{e_{i}\}_{i=1}^{n}$ an orthonormal basis of $\mathbb{C}^{n}$, $\rho(T)=\sum_{i=1}^{n}p_{i}\langle T(e_{i}), e_{i}\rangle$, where $p_{i}\geq 0$ ($1\leq i\leq n$) and $\sum_{i=1}^{n}p_{i}=1$, is a normal state on the algebra $\mathbb{C}^{n\times n}$. For a unit vector $e$, $\rho(T)=\langle T(e), e\rangle$ is a normal pure state.\par
\begin{theorem}{1} ([6])
\emph{The following conditions for a state $\rho$ on a von Neumann algebra $\textsf{A}$ are equivalent,}\\
\emph{1. $\rho$ is normal;}\\
\emph{2. $\rho(T)=\mathrm{tr}(\Lambda T)$ for some density operator $\Lambda$ on $H$ ($\Lambda\geq 0, \mathrm{tr}(\Lambda)=1$);}\\
\emph{3. $\rho$ is ultraweakly continuous;}\\
\emph{4. $\rho$ is weak-operator continuous on $(\textsf{A})_{1}$;}\\
\emph{5. $\rho$ is strong-operator continuous on $(\textsf{A})_{1}$;}\\
\emph{6. $\rho=\sum_{n=1}^{\infty}\omega_{x_{n}}$ where $\sum_{n=1}^{\infty}\|x_{n}\|^{2}=1$;}\\
\emph{7. $\rho(\vee_{i\in I}\,e_{i})=\sum_{i\in I}\,\rho(e_{i})$ for every orthogonal set of projections $\{e_{i}\}_{i\in I}$ $(\mathrm{i.e.}\; e_{i}^{2}=e_{i}=e_{i}^{*}\;  \mathrm{and}\;e_{i}e_{j}=\delta_{ij}e_{i})$.}
\end{theorem}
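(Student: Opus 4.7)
I would prove the seven conditions equivalent by establishing a cycle $(2)\Rightarrow (6)\Rightarrow (3)\Rightarrow (4)\Rightarrow (5)\Rightarrow (1)\Rightarrow (7)\Rightarrow (2)$, so that every pair of conditions is linked. Most of these implications are either definitional or follow from facts already recorded in the Preliminaries, so the substantive work is concentrated in the single step $(7)\Rightarrow (2)$ (equivalently, producing a density operator from mere complete additivity on projections).

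For the easy arc $(2)\Rightarrow (6)\Rightarrow (3)\Rightarrow (4)\Rightarrow (5)\Rightarrow (1)\Rightarrow (7)$ my plan is as follows. To get $(2)\Rightarrow (6)$ I would spectrally decompose $\Lambda = \sum_n \lambda_n |e_n\rangle\langle e_n|$ with $\lambda_n \geq 0$, $\sum \lambda_n = 1$, and set $x_n = \sqrt{\lambda_n}\, e_n$, so that $\mathrm{tr}(\Lambda T) = \sum_n \langle T x_n, x_n\rangle = \sum_n \omega_{x_n}(T)$ with $\sum \|x_n\|^2 = 1$. The implication $(6)\Rightarrow (3)$ is essentially the definition of the ultraweak topology via the seminorms $p_{\{x_n,y_n\}}$; taking $y_n = x_n$ shows $\rho$ is continuous with respect to one of those defining seminorms. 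For $(3)\Rightarrow (4)$ I invoke the preliminary fact that the WOT and the ultraweak topology coincide on every norm-bounded set, in particular on $(\textsf{A})_1$. The step $(4)\Rightarrow (5)$ is immediate since the SOT is finer than the WOT. For $(5)\Rightarrow (1)$ I would use that an increasing bounded net $\{a_i\}$ of positive operators converges strongly to its supremum $a$: writing $0 \leq a - a_i \leq \|a\|$ and applying the square-root functional calculus, $(a-a_i)^{1/2}x \to 0$ for each $x$, hence $\rho(a_i) \to \rho(a)$ by SOT-continuity on $(\textsf{A})_1$. Finally, $(1)\Rightarrow (7)$ is obtained by ordering the finite subsets $J \subset I$ by inclusion and noting that $p_J = \sum_{j \in J} e_j$ is an increasing net of projections with supremum $\vee_{i\in I} e_i$, so that normality yields $\rho(\vee_i e_i) = \sup_J \sum_{j \in J} \rho(e_j) = \sum_{i \in I} \rho(e_i)$.

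The genuinely non-trivial step is $(7)\Rightarrow (2)$, which I would treat by first establishing $(7)\Rightarrow (6)$ and then invoking the inverse of the construction used for $(2)\Rightarrow (6)$ (set $\Lambda = \sum_n |x_n\rangle\langle x_n|$, check it is trace class with unit trace, and verify $\rho(T) = \mathrm{tr}(\Lambda T)$ by swapping the sum with the trace). To produce the representation $\rho = \sum_n \omega_{x_n}$ from complete additivity on projections, my strategy is a Zorn-type exhaustion: invoke Zorn's lemma on the family of orthonormal systems $\{x_\alpha\}$ such that the partial vector-state sums satisfy $\sum_\alpha \omega_{x_\alpha} \leq \rho$ on positive elements, pick a maximal such system, and use complete additivity on the projections onto the corresponding subspaces together with $\rho(1_{\textsf{A}}) = 1$ to show that the sum of norms squared equals one and that the residual $\rho - \sum_n \omega_{x_n}$ vanishes, exploiting maximality to rule out any leftover positive functional.

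The main obstacle is exactly this last maximality/squeeze argument: rigorously closing the gap between $\rho$ and the sum of vector states once a maximal orthogonal family has been selected. One must argue that any nonzero remainder would itself, through a GNS-type construction applied to its support projection, furnish an additional vector state violating maximality, and here the complete additivity hypothesis $(7)$ is used in an essential way to legitimize the countable summation and the identification of supports. All other steps are either calculus of topologies or spectral calculus; the delicate point is this reconstruction of trace-class data from purely order-theoretic information about projections.
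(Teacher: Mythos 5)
The paper itself offers no proof of Theorem 1 --- it is quoted from reference [6] as a standard fact of von Neumann algebra theory --- so there is nothing to match your argument against; I can only assess it on its own terms. Your cycle $(2)\Rightarrow(6)\Rightarrow(3)\Rightarrow(4)\Rightarrow(5)\Rightarrow(1)\Rightarrow(7)\Rightarrow(2)$ is a sensible organization, and the arc from $(2)$ through $(7)$ is correct and routine: the spectral decomposition of $\Lambda$, the definition of the ultraweak seminorms, the coincidence of the WOT and ultraweak topologies on bounded sets, the comparison of SOT and WOT, the strong convergence of a bounded increasing net to its supremum, and the finite-partial-sum net for orthogonal projections are all exactly as you describe.

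The genuine gap is where you expect it, in $(7)\Rightarrow(6)$, and your sketch does not close it. First, a literal error: you propose to run Zorn's lemma over \emph{orthonormal} systems $\{x_\alpha\}$ with $\sum_\alpha\omega_{x_\alpha}\leq\rho$; since each $\omega_{x_\alpha}(1_{\textsf{A}})=\|x_\alpha\|^2=1$ and $\rho(1_{\textsf{A}})=1$, such a system has at most one element, so the exhaustion never starts. You presumably mean orthogonal (not unit) vectors, or vectors supported under orthogonal projections, but even after that repair the decisive step is missing. The heart of the standard proof (Kadison--Ringrose, Theorem 7.1.12) is a local domination lemma: complete additivity on projections forces that beneath every nonzero projection $e$ there is a nonzero projection $f\leq e$ and a vector $x\in H$ with $\rho(a)\leq\langle ax,x\rangle$ for all positive $a\in f\textsf{A}f$, whence $\rho(\,\cdot\,f)$ is weak-operator continuous on the unit ball; only then does a maximal orthogonal family of such $f$'s, which sums to $I$ by maximality and complete additivity, yield $\rho=\sum_n\omega_{x_n}$. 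Your proposed substitute --- ``a GNS-type construction applied to the support projection of the remainder furnishes an additional vector state'' --- does not work as stated: the GNS construction produces a vector in the representation space $H_\rho$, not in $H$, and a singular state (which satisfies $\rho\geq 0$ and can even be supported under a projection) is precisely an example of a positive functional that dominates no vector state of $B(H)$ whatsoever. Distinguishing it from a normal one is exactly what the domination lemma, using hypothesis $(7)$, accomplishes; without it the maximality argument proves nothing. Supplying that lemma (or citing it) is what your proof still needs.
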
\par
\subsection{Physical interpretation and example}
We do not measure vectors in a Hilbert space. What we measure are expectation values. This is precisely the meaning of a state $\rho$ as deﬁned above. A state assigns an expectation value to each physical observable. That is what a physical state is: a list of expectation values for all our observables satisfying reasonable postulates. It means that if $a$ is self-adjoint, then $\rho(a)$ is the expectation value of $a$ and $\rho(a^{2})-(\rho(a))^{2}$ is the variance. Therefore, the deﬁnition of state includes not only expectation values of observables, but also their moments.
\subsubsection{Formulation of quantum mechanics in terms of worlds}
Let $\mathcal{Q}$ be a quantum system with a separable Hilbert space $H$. A \emph{world} of $\mathcal{Q}$ is an orthonormal basis of $H$ [22]. Since there are many orthonormal bases for a Hilbert space $H$, then there are many worlds for $\mathcal{Q}$. For two worlds $W=\{e_{n}\}$ and $W^{\prime}=\{e^{\prime}_{n}\}$, $W$ is identified with $W^{\prime}$ whenever for each $n$, we have $e^{\prime}_{n}=c_{n}e_{n}$ where $c_{n}$ is a complex number with $|c_{n}|=1$. For a given world $W=\{e_{n}\}$ and unitary operator $U$, $W^{\prime}=\{Ue_{n}\}$ is also a world. Furthermore, any operator mapping an orthonormal basis into another one is a unitary operator. It results that the evolution of worlds is described by $U_{t+s}W=U_{t}U_{s}W$ ($t, s\geq 0$), where $W_{0}=W$ and $U_{t}$'s are unitary operators. Consequently, the evolution satisfies the following Schr\"{o}dinger equation:
\begin{gather*}
\frac{dW_{t}}{dt}=iBW_{t}\quad (t>0),
\end{gather*}
with $W_{0}=W$, where $B$ is a self-adjoint operator.

For a certain world $W=\{e_{n}\}$ of $\mathcal{Q}$, an \emph{observable} is a self-adjoint operator on $H$ which is diagonal in the basis $\{e_{n}\}$. The set of all observables relative to $W$, is denoted by $\mathcal{O}(W)$.

Let $\mathcal{Q}=\mathcal{Q}_{1}\times\mathcal{Q}_{2}$ be a bipartite system composed of two subsystems $\mathcal{Q}_{1}$ and $\mathcal{Q}_{2}$, with Hilbert space $H_{1}\otimes H_{2}$ where $H_{1}$ and $H_{2}$ are Hilbert spaces of $\mathcal{Q}_{1}$ and $\mathcal{Q}_{2}$, respectively. For two given observables $O_{1}$ and $O_{2}$ in the worlds $W_{1}=\{e_{j}^{1}\}$ (of $\mathcal{Q}_{1}$ ) and $W_{2}=\{e_{k}^{2}\}$ (of $\mathcal{Q}_{2}$ ), respectively, $O_{1}\otimes O_{2}$ is an observable in the world $W=\{e_{j}^{1}\otimes e_{k}^{2}\}$ of $\mathcal{Q}$. Conversely, for a given observable $O$ in $W=\{e_{j}^{1}\otimes e_{k}^{2}\}$, there are observables $O_{1}$ and $O_{2}$ in the worlds $W_{1}=\{e_{j}^{1}\}$ and $W_{2}=\{e_{k}^{2}\}$ such that $O=O_{1}\otimes O_{2}$. Similarly, we can study observables in a world of multipartite systems.

In the formalism of quantum mechanics in terms of worlds, observables are considered \emph{relative} to a certain world of the system. This relativity goes back to the fundamental works of Dirac and 't Hooft. Dirac in his principles of quantum mechanics [23] introduced the notion of orthonormal representation as an orthonormal basis and 't Hooft in his deterministic quantum mechanics [24] introduced the notion of ontological basis as an orthonormal basis.

Let $\mathcal{M}(W)$ be the von Neumann subalgebra of $B(H)$ generated by all bounded operators in $\mathcal{O}(W)$. A \emph{state} in $W$ is defined to be a state on $\mathcal{M}(W)$. For a world $W=\{e_{n}\}$, $w_{n}(T)=\langle T(e_{n}), e_{n}\rangle$ defines a pure state corresponding to the pure state $|e_{n}\rangle$ in the conventional formulation of quantum mechanics.

We will continue our discussion on this formalism and realization of singular states (which will be defined in Section 5) in Section 7.1.
\section{The GNS construction}
The basic idea of the GNS construction is that given an algebra $\textsf{A}$ of observables and a state $\rho$ on this algebra, we can construct the Hilbert space on which the algebra of observables acts on. From the mathematical point of view, the algebra of observables is a $C^{*}-$algebra.\par
The \emph{Gelfand-Neumark-Segal (GNS) theorem} constructs a representation of a $C^{*}-$algebra $\textsf{A}$ starting from a state $\rho:\textsf{A}\longrightarrow \mathbb{C}$. The main idea is to define an inner product via $\langle a, b\rangle:=\rho(b^{*}a)$. If $\rho$ is not faithful, one has to define the elements of the sought Hilbert space in terms of equivalent classes and consider the set $\textsf{A} / N$ with $N=\{a\in \textsf{A} : \rho(a^{*}a)=0\}$. By completing $\textsf{A} /N$ we get a Hilbert space $H_{\rho}$ and define $\pi_{\rho}(a)$ as the left multiplication by $a$ on $H_{\rho}$, i.e. $\pi_{\rho}(a)(b)=ab$. We can summarize this discussion in the following theorem,
\begin{theorem}{2}([1])
\emph{Given a state $\rho:\textsf{A}\longrightarrow \mathbb{C}$, there is a cyclic representation $\pi_{\rho}:\textsf{A}\longrightarrow B(H_{\rho})$ with a unit cyclic vector $x_{\rho}\in H_{\rho}$, such that $\rho=\omega_{x_{\rho}}\circ \pi_{\rho}$; that is $\rho(a)=\langle \pi_{\rho}(a)x_{\rho}, x_{\rho}\rangle$ where $a\in \textsf{A}$. If $\pi:\textsf{A}\longrightarrow B(K)$ is another cyclic representation with unit cyclic vector $x$ such that $\rho(a)=\langle \pi(a)x, x\rangle$, then $\pi_{\rho}$ and $\pi$ are equivalent.}
\end{theorem}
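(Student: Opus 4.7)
The plan is to follow the standard GNS construction in two stages: first build the representation and verify the claimed formula, then establish uniqueness up to unitary equivalence.

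For the existence part, I would begin by defining the sesquilinear form $\langle a, b\rangle_\rho := \rho(b^*a)$ on $\textsf{A}$. Positivity of $\rho$ gives a positive semidefinite form, so the Cauchy--Schwarz inequality $|\rho(b^*a)|^2 \leq \rho(a^*a)\rho(b^*b)$ holds. This immediately implies that $N := \{a\in\textsf{A} : \rho(a^*a)=0\}$ coincides with $\{a : \rho(b^*a)=0 \text{ for all } b\}$, and from this description $N$ is visibly a closed left ideal of $\textsf{A}$. Consequently the form descends to a genuine inner product on the quotient $\textsf{A}/N$, whose completion is the Hilbert space $H_\rho$. I would then define $\pi_\rho(a)$ on the dense subspace $\textsf{A}/N$ by left multiplication $\pi_\rho(a)[b] := [ab]$; the fact that $N$ is a left ideal makes this well defined.

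The main technical step is to show that each $\pi_\rho(a)$ is bounded, so that it extends continuously to all of $H_\rho$. For this I would use the standard trick that $\|a\|^2\cdot 1_{\textsf{A}} - a^*a$ is positive in $\textsf{A}$, hence equals $c^*c$ for some $c$; applying $\rho$ after sandwiching by $b^*(\,\cdot\,)b$ yields $\rho(b^*a^*ab) \leq \|a\|^2 \rho(b^*b)$, which translates to $\|\pi_\rho(a)[b]\|^2 \leq \|a\|^2 \|[b]\|^2$. Once $\pi_\rho$ is a bounded map into $B(H_\rho)$, checking that it is a $*$-homomorphism is a routine computation using the definition of the inner product. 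Taking $x_\rho := [1_{\textsf{A}}]$ gives a unit cyclic vector (density of $\pi_\rho(\textsf{A})x_\rho$ is immediate since this is precisely $\textsf{A}/N$), and $\langle \pi_\rho(a)x_\rho, x_\rho\rangle = \rho(1_{\textsf{A}}^* \cdot a) = \rho(a)$ by construction.

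For uniqueness, given a second cyclic representation $(\pi, K, x)$ reproducing $\rho$, I would define $U_0 : \pi_\rho(\textsf{A})x_\rho \to \pi(\textsf{A})x$ by $U_0 \pi_\rho(a)x_\rho := \pi(a)x$. The key verification is that $U_0$ is well defined and isometric on the dense subspaces, and both facts reduce to the single computation
\begin{equation*}
\langle \pi(a)x, \pi(b)x\rangle = \langle \pi(b^*a)x, x\rangle = \rho(b^*a) = \langle \pi_\rho(a)x_\rho, \pi_\rho(b)x_\rho\rangle,
\end{equation*}
which uses the hypothesis $\rho = \omega_x \circ \pi$ together with the $*$-homomorphism property. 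Cyclicity of both representations then lets $U_0$ extend by density and continuity to a unitary $U : H_\rho \to K$ with $Ux_\rho = x$, and the intertwining relation $U\pi_\rho(a) = \pi(a)U$ follows by checking it first on the dense subspace $\pi_\rho(\textsf{A})x_\rho$.

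The main obstacle I anticipate is the boundedness of $\pi_\rho(a)$; everything else is essentially formal once the inner product structure is in place. The $C^*$-identity and the existence of a positive square root for $\|a\|^2\cdot 1_\textsf{A} - a^*a$ are what make this estimate work, and they are the only nontrivial algebraic input beyond the definition of a state.
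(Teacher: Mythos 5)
Your proposal is correct and follows exactly the construction the paper sketches before the theorem statement (the inner product $\rho(b^*a)$, the quotient by $N=\{a:\rho(a^*a)=0\}$, completion to $H_\rho$, and left multiplication), which the paper otherwise delegates to reference [1]. The details you supply — the left-ideal property of $N$ via Cauchy--Schwarz, the boundedness estimate from positivity of $\|a\|^2 1_{\textsf{A}}-a^*a$, and the unitary intertwiner built on the dense cyclic subspaces — are the standard and complete way to fill in that sketch.
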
\par
For example with $\textsf{A}=\mathbb{C}^{2\times 2}$, $H_{\rho}\cong \mathbb{C}^{4}\; \mathrm{or}\; \mathbb{C}^{2}$ [11]. One of the important results of the GNS theorem is that every $C^{*}-$algebra has a faithful representation (Gelfand-Neumark theorem). In other words, for every $C^{*}-$algebra $\textsf{A}$, there is a Hilbert space $H$ such that $\textsf{A}$ is a $*-$isomorphic to a $C^{*}-$subalgebra of $B(H)$. Suppose that $\textsf{A}$ is a $C^{*}-$algebra with the state space $S(\textsf{A})$. The representation $\Pi=\bigoplus_{\rho\in S(\textsf{A})}\pi_{\rho}$ of $\textsf{A}$ is faithful and is called the \emph{universal representation} of $\textsf{A}$.\par
\subsection{Physical interpretation and example}
The GNS construction is closely related to constructing a Hilbert space from a vacuum state by acting on it with creation operators and thereby adding particles. If we think $a$ and $a^{*}$ as annihilation and creation operators, respectively, the meaning of $\langle a, a\rangle:=\rho(a^{*}a)$ is the expectation value of the particle number for the state $\rho$. When a representation is irreducible, every non-zero vector is cyclic—by using annihilation operators, one can get to the ground state. One should take this comment with a grain of salt due to domain and distribution issues. A ﬁnite-dimensional and completely rigorous example occurs in the theory of spin by use of ladder operators [12].
\subsubsection{Approach to quantum field theory}
Given an $n$-dimensional manifold $\mathcal{M}$, the \emph{cotangent bundle} of $\mathcal{M}$ is defined to be the 2$n$-dimensional manifold comprised by all points $x$ of $\mathcal{M}$ together with all cotangent vectors at $x$. The \emph{phase space} $\mathcal{P}$ is defined to be the cotangent bundle of an $n$-dimensional configuration manifold $\mathcal{M}$ such that the symplectic form $\Omega_{ab}$, i.e. nondegenerate closed two-form, on $\mathcal{P}$ is given by
\begin{gather*}
\Omega_{ab}=\sum_{\mu=1}^{n}2(\nabla p_{\mu})_{[a}(\nabla q_{\mu})_{b]},
\end{gather*}
where $(q_{1},...,q_{n};p_{1},...,p_{n})$ are canonical coordinates on $\mathcal{P}$.

In classical mechanics, an \emph{observable} is a smooth map $f:\mathcal{P}\longrightarrow \mathbb{R}$. The set of classical observables $\mathcal{O}_{c}$ is an infinite dimensional vector space. The inverse symplectic form $\Omega^{ab}$ gives rise to an algebraic structure on $\mathcal{O}_{c}$, called the \emph{Poisson bracket}, which is defined by
\begin{gather*}
\{f, g\}=\Omega^{ab}\nabla_{a}f\nabla_{b}g.
\end{gather*}

In quantum mechanics, an \emph{observable} is a self-adjoint operator on an infinite-dimensional separable Hilbert space $\mathcal{F}$. The set of quantum observables is denoted by $\mathcal{O}_{q}$. Let $\hat{}:\mathcal{O}_{c}\longrightarrow \mathcal{O}_{q}$ be the map taking classical observables to quantum observables such that for any pair of classical observables $f, g$ we have
\begin{gather*}
[\hat{f}, \hat{g}]=i\{\widehat{f, g}\},
\end{gather*}
where $\hbar=1$.

Let $\mathcal{P}$ be a symplectic vector space, i.e. a vector space such that the symplectic form $\Omega_{ab}$ has constant components in a globally parallel basis. Therefore, $\Omega$ may be considered as an antisymmetric bilinear map $\Omega:\mathcal{P}\times \mathcal{P}\longrightarrow \mathbb{R}$ rather than a tensor field. In this notation, the Poisson bracket of the fundamental observables, i.e. linear functions, is given by
\begin{gather*}
\{\Omega(\psi_{1}, .), \Omega(\psi_{2}, .)\}=-\Omega(\psi_{1}, \psi_{2}),
\end{gather*}
and consequently
\begin{gather}
[\hat{\Omega}(\psi_{1}, .), \hat{\Omega}(\psi_{2}, .)]=-i\Omega(\psi_{1}, \psi_{2})I,
\end{gather}
for all $\psi_{1}, \psi_{2}\in \mathcal{P}$, where $I$ is the identity operator. It is convenient to work with the exponential version of (1). Let $W(\psi)=e^{i\Omega(\psi, .)}$ and seek a correspondence map "\, $\widehat{}$ \," for which $\hat{W}(\psi)$ is unitary, varies continuously with $\psi$ in the SOT and satisfies the following relations:
\begin{gather*}
\hat{W}(\psi_{1})\hat{W}(\psi_{2})=e^{\frac{i\Omega(\psi_{1}, \psi_{2})}{2}}\hat{W}(\psi_{1}+\psi_{2}),\quad \hat{W}^{*}(\psi)=\hat{W}(-\psi).
\end{gather*}
These relations known as the \emph{Weyl relations} which uniquely determine $(\mathcal{F}, \hat{W}(\psi))$.

In the framework of general relativity, spacetime structure is defined by a four-dimensional manifold $M$, on which there is a Lorentz metric $g_{ab}$. Furthermore, $g_{ab}$ is related to the matter distribution by Einstein's equation. For the spacetime structure $(M, g_{ab})$, where $g_{ab}$ is smooth, we study the formulation of the quantum theory of a Klein-Gordon scalar field $\phi$. The curved spacetime version of the Klein-Gordon equation is
\begin{gather}
\nabla^{a}\nabla_{a}\phi-m^{2}\phi=0,
\end{gather}
where $\nabla_{a}$ is the derivative operator compatible with $g_{ab}$. Let $\Sigma\subset M$ be any closed set which is achronal, i.e. no pair of points $P, Q\in\Sigma$ can be joined by a timelike curve. The \emph{domain of dependence} of $\Sigma$ is the set of all points $P$ of $M$ such that every (past and future) inextendible causal curve [25] through $P$ intersects $\Sigma$ and it is denoted by $D(\Sigma)$. If $D(\Sigma)=M$, then $\Sigma$ is said to be a \emph{Cauchy surface} for the spacetime $(M, g_{ab})$. A spacetime is called \emph{globally hyperbolic} if it admits a Cauchy surface.

Given a globally hyperbolic spacetime $(M, g_{ab})$. Equation (2) has a well posed initial value formulation, with the initial data $(\phi, \tau)$ on a Cauchy surface $\Sigma$, where $\tau=n^{a}\nabla_{a}\phi$, with $n^{a}$ the unit normal to $\Sigma$. Let the classical phase space $\mathcal{P}$ of the Klein-Gordon theory be the set of pairs $(\phi, \tau)$ of smooth and of compact support functions on $\Sigma$.  Let $\Xi$ be the space of solutions to (2) which arise from initial data in $\mathcal{P}$. $\Xi$ is independent of the choice of $\Sigma$ and each $(\phi, \tau)\in\mathcal{P}$, gives rise to a unique element of $\Xi$ [27]. The symplectic structure $\Omega$ on $\mathcal{P}$ is given by
\begin{gather}
\Omega[(\phi_{1}, \tau_{1}),(\phi_{2}, \tau_{2})]=\int_{\sum}(\tau_{1}\phi_{2}-\tau_{2}\phi_{1})d^{3}x.
\end{gather}
$\Omega$ is conserved for solutions, so (3) gives rise to a bilinear map $\Omega:\Xi\times\Xi\longrightarrow\mathbb{R}$.

We construct a quantum field theory for the Klein-Gordon field as follows:\\
Let $\nu:\Xi\times\Xi\longrightarrow \mathbb{R}$ be a bilinear map such that for all $\psi_{1}\in\Xi$,
\begin{gather}
\nu(\psi_{1}, \psi_{1})=\frac{1}{4}\mathrm{sup}_{\psi_{2}\neq 0}\frac{[\Omega(\psi_{1},\psi_{2})]^{2}}{\nu(\psi_{2}, \psi_{2})}.
\end{gather}
There is a wide class of $\nu$'s satisfying (4) [26]. Given a bilinear map $\nu$ satisfying (4), we complete $\Xi$ in the inner product $2\nu$. By using $\Omega$, we convert the completion of $\Xi$ into a complex Hilbert space $\mathcal{H}$ [27]. Consequently, a projection map $L:\Xi\longrightarrow \mathcal{H}$ gives rise. To define the quantum field theory, we choose the Hilbert space $\mathcal{F}$ to be the symmetric Fock space $\mathcal{F}_{s}(\mathcal{H})=\bigoplus_{n=0}^{+\infty}\,(\mathcal{H}^{\bigotimes_{s}n})$, where $\mathcal{H}^{\bigotimes_{s}n}$ is the subspace of the $n$-fold tensor product $\mathcal{H}^{\bigotimes n}$ consisting of the maps which are totally symmetric in the $n$-variables. Finally, we define the observables $\hat{\Omega}(\psi, .)$ as self-adjoint operators on $\mathcal{F}$ by
\begin{gather*}
\hat{\Omega}(\psi, .)=ia(\overline{L\psi})-ia^{*}(L\psi),
\end{gather*}
where $a, a^{*}$ are the annihilation and creation operators, respectively. However, the construction has the potentially unsatisfactory feature that it appears to depend on our choice of $\nu$. Let $\mathcal{H}_{1}$ and $\mathcal{H}_{2}$ be the Hilbert spaces associated with bilinear maps $\nu_{1}$ and $\nu_{2}$ satisfying (4), respectively. In the case when $\mathrm{dim}\,\mathcal{P}<\infty$, it follows from the Stone-von Neumann theorem [28] that $\{\mathcal{F}_{s}(\mathcal{H}_{1}); \hat{\Omega}_{1}(\psi, .)\}$ and $\{\mathcal{F}_{s}(\mathcal{H}_{2}); \hat{\Omega}_{2}(\psi, .)\}$ are unitarily equivalent, i.e. there is a unitary operator $U$ such that $U^{-1}\hat{\Omega}_{2}(\psi, .)U=\hat{\Omega}_{1}(\psi, .)$ for all $\psi$. But in the case when $\mathrm{dim}\,\mathcal{P}=\infty$ (general quantum field theory in curved spacetime), unitarily inequivalent field theory constructions give rise.

In the usual formulation of a quantum theory, one first constructs states in a Hilbert space $\mathcal{F}$ and then defines observables as operators on $\mathcal{F}$. But in our approach, observables are constructed as elements of an abstract algebra and then states are defined as objects which act upon observables. The key observation that makes our approach applicable is that although unitarily inequivalent field theory constructions exist, the algebraic structure of the field operators in the unitarily inequivalent constructions are the same.

Let $\mathcal{L}(\mathcal{F})$ be the $C^{*}-$algebra of all bounded linear maps on $\mathcal{F}$ with the *-operation corresponding to taking adjoints. Let $\mathcal{W}=\overline{\mathrm{Span}\{\hat{W}(\psi): \psi\in\Xi\}}$, where $\hat{W}(\psi)=e^{i\hat{\Omega}(\psi, .)}$ and the closure is taken in the norm provided by $\mathcal{L}(\mathcal{F})$. $\mathcal{W}$ is a $C^{*}-$subalgebra of $\mathcal{L}(\mathcal{F})$ which is called the \emph{Weyl algebra}. In fact, although two bilinear maps $\nu_{1}$ and $\nu_{2}$ satisfying (4) may define unitarily inequivalent quantum field theory constructions, the associated $C^{*}-$algebras $\mathcal{W}_{1}$ and $\mathcal{W}_{2}$ which give rise are isomorphic [29]. According to this fact, we can define the fundamental observables for quantum field theory in a curved spacetime to be elements of the Weyl algebra $\mathcal{W}$. Then we define a state $w$ of the quantum field to be a linear map $w:\mathcal{W}\longrightarrow \mathbb{C}$ satisfying $w(A^{*}A)\geq 0$ for all $A\in\mathcal{W}$ and $w(I)=1$. A natural question that gives rise is: What is the relationship between the notion of states defined here and states of the Hilbert space $\mathcal{F}=\mathcal{F}_{s}(\mathcal{H})$? We take into account that given any density matrix $\rho$ on $\mathcal{F}$, which carries a representation $\pi:\mathcal{W}\longrightarrow\mathcal{L}(\mathcal{F})$, we obtain a state $w:\mathcal{W}\longrightarrow \mathbb{C}$ by $w(A)=\mathrm{tr}[\rho\pi(A)]$. Therefore, all states arising in all quantum field theory constructions give rise to states defined here. The converse of this result is given by Theorem 2.

The crucial advantage of our approach is that it allows us to treat all states, in particular states arising in unitarily inequivalent quantum field theory constructions, on an equal footing. Thus, one can define the theory without the need to select a preferred construction.
\section{Resolution of bounded linear functionals into normal and singular states}
Let $\pi$ and $\Pi$ be a representation and the universal representation of a $C^{*}-$algebra $\textsf{A}$ on a Hilbert space $H$, respectively, also $\textsf{B}=\Pi(\textsf{A})$ and $\textsf{B}_{\pi}=\pi(\textsf{A})$. Firstly, we need to prove the following two lemmas on extension of the bounded linear mappings.
\begin{lemma}{1}
\emph{There is an isometric isomorphism from $\textsf{B}^{*}$ onto $(\textsf{B}^{-})_{*}$.
}\end{lemma}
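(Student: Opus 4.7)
\emph{Plan.} My strategy is to exhibit the restriction map $R\colon (\textsf{B}^{-})_{*}\longrightarrow \textsf{B}^{*}$, $\psi\mapsto \psi|_{\textsf{B}}$, as an isometric bijection; its inverse is then the claimed isometric isomorphism $\textsf{B}^{*}\longrightarrow (\textsf{B}^{-})_{*}$. The two key ingredients are the Kaplansky density theorem (for the isometry and injectivity of $R$) and the defining property of the universal representation $\Pi=\bigoplus_{\rho\in S(\textsf{A})}\pi_{\rho}$ together with the Jordan decomposition of bounded functionals (for surjectivity).

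\emph{Isometry and injectivity.} Elements of $(\textsf{B}^{-})_{*}$ are, by definition, weak-operator continuous on $(\textsf{B}^{-})_{1}$ and hence bounded on $\textsf{B}^{-}$, so $R$ is a well-defined linear contraction. Given $\psi\in (\textsf{B}^{-})_{*}$, the Kaplansky density theorem (recalled in Section 2) tells me that $(\textsf{B})_{1}$ is strong-operator dense in $(\textsf{B}^{-})_{1}$; since the WOT and the ultraweak topology coincide on norm-bounded sets, this density persists ultraweakly, and $\psi$ is ultraweakly continuous there. Consequently $\sup_{b\in (\textsf{B})_{1}}|\psi(b)|=\sup_{x\in (\textsf{B}^{-})_{1}}|\psi(x)|=\|\psi\|$, giving $\|R(\psi)\|=\|\psi\|$. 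Isometry yields injectivity.

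\emph{Surjectivity.} This is the substantive part. Given $\varphi\in \textsf{B}^{*}$, pull it back to $\varphi\circ \Pi\in \textsf{A}^{*}$ and decompose it as a complex linear combination of states: first split into Hermitian and anti-Hermitian parts, then apply the Jordan decomposition to each Hermitian piece, obtaining $\varphi\circ \Pi=\sum_{k=1}^{4}c_{k}\rho_{k}$ with $c_{k}\in \mathbb{C}$ and $\rho_{k}\in S(\textsf{A})$. Because $\pi_{\rho_{k}}$ appears as a direct summand of $\Pi$, the GNS cyclic vector $x_{\rho_{k}}$ of Theorem 2 sits in $H$ as a unit vector $\xi_{k}$ satisfying $\rho_{k}(a)=\langle \Pi(a)\xi_{k},\xi_{k}\rangle$ for every $a\in \textsf{A}$. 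Hence for every $b=\Pi(a)\in \textsf{B}$ one has $\varphi(b)=\sum_{k}c_{k}\langle b\xi_{k},\xi_{k}\rangle$. I then define $\tilde{\varphi}(x):=\sum_{k=1}^{4}c_{k}\langle x\xi_{k},\xi_{k}\rangle$ for $x\in \textsf{B}^{-}$. Each summand $\omega_{\xi_{k}}$ is a vector state on $B(H)$, which is ultraweakly continuous, so the finite linear combination $\tilde{\varphi}$ is ultraweakly continuous on $\textsf{B}^{-}$; by construction $\tilde{\varphi}|_{\textsf{B}}=\varphi$, so $\tilde{\varphi}\in (\textsf{B}^{-})_{*}$ and $R(\tilde{\varphi})=\varphi$.

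\emph{Main obstacle.} The delicate step is surjectivity, where one must combine three facts coherently: the Jordan decomposition of a generic bounded functional into positive pieces and then into states; the fact that in the universal representation every state of $\textsf{A}$ is realized as a vector state; and the ultraweak continuity of vector states on $B(H)$. Without the universal representation, a generic state would only extend to an ultraweakly continuous functional on the bicommutant of its GNS image, not on the WOT closure of the given representation; this is precisely where the special choice $\Pi=\bigoplus_{\rho}\pi_{\rho}$ is essential. Once surjectivity is established, isometry via Kaplansky closes the argument and $R^{-1}\colon \textsf{B}^{*}\longrightarrow (\textsf{B}^{-})_{*}$ is the asserted isometric isomorphism.
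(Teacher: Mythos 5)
Your proposal is correct and follows essentially the same route as the paper: both arguments rest on the decomposition of a bounded functional into at most four states, the fact that in the universal representation every state of $\textsf{A}$ becomes a vector state (hence extends ultraweakly continuously to $\textsf{B}^{-}$), and the Kaplansky density theorem for the isometry. You merely present the map in the opposite direction (restriction rather than extension) and make the Jordan decomposition explicit where the paper cites it as well known.
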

\begin{proof}
According to Theorem 2, for every state $\xi$ on $\textsf{A}$, there is a representation $\pi_{\xi}$ with a unit vector $x$ such that $\xi=\omega_{x}\circ \pi_{\xi}$. Hence, $\xi=\omega_{u}\circ \Pi$, where $u=\bigoplus_{\rho\in S(\textsf{A})}u_{\rho}\in\bigoplus_{\rho\in S(\textsf{A})}H_{\rho}$, defined by $u_{\xi}=x, u_{\rho}=0$ for $\rho\neq\xi$. The mapping $\xi\longmapsto \xi\circ \Pi^{-1}$ sends the state space of $\textsf{A}$ onto the state space of $\textsf{B}$. Thereby, every state on $\textsf{B}$ is a vector state. It is well-known that every bounded linear functional on $\textsf{B}$ can be written as a linear combination of at most four states. Every vector state $\omega_{x}$ on $\textsf{B}$ extends to $\omega_{x}$ on $\textsf{B}^{-}$, therefore $f$ extends to the corresponding linear combination of vector states on $\textsf{B}^{-}$. It shows that $f$ extends to a weak-operator continuous linear functional $\widetilde{f}$ on $\textsf{B}^{-}$. This extension is unique by continuity. With $T\in (\textsf{B})_{1}$, we have $|\widetilde{f}(T)|=|f(T)|\leq\|f\|$, then for each $T\in (\textsf{B}^{-})_{1}$, by the Kaplansky density theorem, $|\widetilde{f}(T)|\leq\|f\|$. It results that $\|\widetilde{f}\|\leq \|f\|$ and hence $\|\widetilde{f}\|=\|f\|$. With $g\in(\textsf{B}^{-})_{*}$, let $f=g|_{\textsf{B}}\in \textsf{B}^{*}$. Therefore, $g=\widetilde{f}$ on $\textsf{B}$ and are ultraweakly continuous on $\textsf{B}^{-}$, by Theorem 1, and thus $g=\widetilde{f}$ on $\textsf{B}^{-}$.
\end{proof}
\begin{lemma}{2}
\emph{Let $\textsf{A}$ be a $C^{*}-$subalgebra of $B(H)$, $J$ be a Hilbert space, and $\alpha:\textsf{A}\longrightarrow B(J)$ be an ultraweakly continuous $*-$homomorphism. Then $\alpha$ extends uniquely to an ultraweakly continuous $*-$homomorphism $\widetilde{\alpha}:\textsf{A}^{-}\longrightarrow B(J)$ with $\widetilde{\alpha}(\textsf{A}^{-})=\alpha(\textsf{A})^{-}$.}
\end{lemma}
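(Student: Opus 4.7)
The plan is to construct $\widetilde{\alpha}$ by duality with the predual $B(J)_{*}$, verify the multiplicative and involutive structure by Kaplansky density combined with separate ultraweak continuity of the operations, and finally identify the range by analysing $\ker\widetilde{\alpha}$ inside the von Neumann algebra $\textsf{A}^{-}$.

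For the construction, I would first observe that for each $\omega\in B(J)_{*}$ the functional $\omega\circ\alpha$ is ultraweakly continuous on $\textsf{A}$; by Hahn-Banach for locally convex spaces (equivalently, via the explicit representation $a\mapsto\sum_{n}\langle a x_{n}, y_{n}\rangle$ with $\sum_{n}(\|x_{n}\|^{2}+\|y_{n}\|^{2})<\infty$) it extends to an ultraweakly continuous functional on $B(H)$, and its restriction $\phi_{\omega}$ to $\textsf{A}^{-}$ is the unique ultraweakly continuous extension of $\omega\circ\alpha$, with uniqueness coming from ultraweak density of $\textsf{A}$ in $\textsf{A}^{-}$ as already exploited in Lemma 1. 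For fixed $T\in\textsf{A}^{-}$ the assignment $\omega\mapsto\phi_{\omega}(T)$ is a bounded linear functional on $B(J)_{*}$ of norm at most $\|T\|$, hence corresponds to a unique element $\widetilde{\alpha}(T)\in(B(J)_{*})^{*}=B(J)$. The resulting map $\widetilde{\alpha}:\textsf{A}^{-}\to B(J)$ is linear, contractive, ultraweakly-to-ultraweakly continuous (being defined as a predual adjoint), agrees with $\alpha$ on $\textsf{A}$, and is unique by ultraweak density.

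The $*$-algebra identities I would handle by a standard density argument: since the involution and right multiplication by a fixed element are ultraweakly continuous, the maps $T\mapsto\widetilde{\alpha}(T^{*})-\widetilde{\alpha}(T)^{*}$ and, for fixed $a\in\textsf{A}$, $T\mapsto\widetilde{\alpha}(Ta)-\widetilde{\alpha}(T)\widetilde{\alpha}(a)$ are ultraweakly continuous on $\textsf{A}^{-}$ and vanish on the ultraweakly dense subset $\textsf{A}$, hence vanish throughout $\textsf{A}^{-}$. Running the same argument on the left factor (now with the first factor varying in $\textsf{A}^{-}$) promotes multiplicativity to all of $\textsf{A}^{-}\times\textsf{A}^{-}$.

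The step I expect to be the main obstacle is the image equality $\widetilde{\alpha}(\textsf{A}^{-})=\alpha(\textsf{A})^{-}$. The inclusion $\subseteq$ is immediate: Kaplansky density gives, for each $T\in\textsf{A}^{-}$, a bounded net in $\textsf{A}$ converging ultraweakly to $T$, and ultraweak continuity of $\widetilde{\alpha}$ places $\widetilde{\alpha}(T)$ in the ultraweak closure of $\alpha(\textsf{A})$. For the reverse inclusion one must show that $\widetilde{\alpha}(\textsf{A}^{-})$ is itself ultraweakly closed. The route I would take is via $\ker\widetilde{\alpha}$: it is an ultraweakly closed two-sided $*$-ideal in $\textsf{A}^{-}$, so there is a central projection $p\in\textsf{A}^{-}$ with $\ker\widetilde{\alpha}=\textsf{A}^{-}(1-p)$, and $\widetilde{\alpha}$ restricts to an injective normal $*$-homomorphism on $\textsf{A}^{-}p$, hence an isometry. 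The ultraweakly compact unit ball of $\textsf{A}^{-}p$ then maps onto the unit ball of the image, which is therefore ultraweakly compact; a Krein-Smulian argument upgrades this to ultraweak closedness of the whole image. Since that image contains $\alpha(\textsf{A})$, it contains $\alpha(\textsf{A})^{-}$, and the equality follows.
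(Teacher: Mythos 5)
Your proposal is correct, and it reaches the same three milestones as the paper (existence and uniqueness of the normal extension, the $*$-homomorphism identities by density, and the range identity), but it uses different machinery at the two technically delicate points. For the construction, the paper extends $\alpha$ ball-by-ball, using $((\textsf{A})_{r})^{-}=(\textsf{A}^{-})_{r}$ and ultraweak compactness of $(B(J))_{r}$ to produce the maps $\alpha_{r}$ and then gluing them; you instead build $\widetilde{\alpha}$ as the adjoint of $\omega\longmapsto\phi_{\omega}$ on the predual $B(J)_{*}$. Your route makes the well-definedness of the limit explicit (the paper's ``extends uniquely to an ultraweakly continuous mapping on the closed ball'' quietly needs exactly the fact that each $\omega\circ\alpha$ admits a normal extension, which is what you prove first), at the cost of invoking Hahn--Banach for the ultraweak topology and the duality $B(J)=(B(J)_{*})^{*}$; note that your norm bound $|\phi_{\omega}(T)|\leq\|\omega\|\,\|T\|$ does require the Kaplansky density argument from Lemma 1, as you indicate. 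For the range, the paper argues directly: with $\textsf{M}=\alpha(\textsf{A})$ one has $(\textsf{M})_{1}=\alpha((\textsf{A})_{1})$ (a $*$-homomorphism maps the closed unit ball onto that of its image), so the ultraweakly compact set $\widetilde{\alpha}((\textsf{A}^{-})_{1})$ contains $(\textsf{M})_{1}$ and hence its closure $(\textsf{M}^{-})_{1}$, giving $\textsf{M}^{-}\subseteq\widetilde{\alpha}(\textsf{A}^{-})$ at once. You instead pass through $\ker\widetilde{\alpha}=\textsf{A}^{-}(1-p)$ for a central projection $p$, the isometry of the injective restriction to $\textsf{A}^{-}p$, and Krein--Smulian to conclude the image is ultraweakly closed; this is heavier but buys you the structural statement that $\widetilde{\alpha}$ is a normal $*$-isomorphism of $\textsf{A}^{-}p$ onto $\alpha(\textsf{A})^{-}$, which is precisely what the paper proves separately as Theorem 3. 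Both arguments are sound.
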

\begin{proof}
Without loss of generality, we may assume that $\|\alpha\|=1$. For each positive real number $r$, $\alpha$ maps the ball $(\textsf{A})_{r}$ (i.e. ball with radius $r$) into the ball $(B(J))_{r}$. Since $(\textsf{A})_{r}^{-}=(\textsf{A}^{-})_{r}$ and $(B(J))_{r}$ is ultraweakly compact then $\alpha|_{(\textsf{A})_{r}}$ extends uniquely to an ultraweakly continuous mapping $\alpha_{r}:(\textsf{A}^{-})_{r}\longrightarrow(B(J))_{r}$. With $0<r<s$, we have $\alpha_{r}(T)=\alpha_{s}(T)$ for each $T\in (\textsf{A}^{-})_{r}$. Hence, we can define $\widetilde{\alpha}:\textsf{A}^{-}\longrightarrow B(J)$ by $\widetilde{\alpha}(T)=\alpha_{r}(T)$ where $T\in (\textsf{A}^{-})_{r}$. Given $S, T\in\textsf{A}^{-}$, there are nets $\{S_{i}\}, \{T_{j}\}$ in $\textsf{A}$, converging ultraweakly to $S, T$, respectively, such that $\|S_{i}\|\leq \|S\|, \|T_{j}\|\leq \|T\|$. Given scalars $s, t$, let $r=\mathrm{max}\{\|T\|, \|S\|, |s|\|S\|+|t|\|T\|\}$, therefore, $S_{i}, T_{j}, sS_{i}+tT_{j}$ all lie in $(\textsf{A}^{-})_{r}$, and $\widetilde{\alpha}(sS_{i}+tT_{j})=s\widetilde{\alpha}(S_{i})+t\widetilde{\alpha}(T_{j})$, by the linearity of $\alpha=\widetilde{\alpha}|_{\textsf{A}}$. When $S_{i}\rightarrow S$ and then $T_{j}\rightarrow T$, we have $\widetilde{\alpha}(sS+tT)=s\widetilde{\alpha}(S)+t\widetilde{\alpha}(T)$, by the ultraweak continuity of $\widetilde{\alpha}|_{(\textsf{A}^{-})_{r}}$. Hence, $\widetilde{\alpha}$ is a linear mapping. Similarly, we can prove that
\begin{align*}
\widetilde{\alpha}(T^{*})=\widetilde{\alpha}(T)^{*},\quad \widetilde{\alpha}(ST)=\widetilde{\alpha}(S)\widetilde{\alpha}(T)\quad S, T\in \textsf{A}^{-},
\end{align*}
by the ultraweak continuity of $\widetilde{\alpha}$ and the adjoint operation, and the separate ultraweak continuity of multiplication. Hence, $\widetilde{\alpha}$ is a $*-$homomorphism. It is clear that $\widetilde{\alpha}$ extends $\alpha$ and is norm decreasing, then $\|\widetilde{\alpha}\|=\|\alpha\|=1$. If $f$ is an ultraweakly continuous linear functional on $B(J)$, the composition $f \circ \widetilde{\alpha}$ is ultraweakly continuous on $(\textsf{A}^{-})_{r}$ for each $r>0$. Then by Theorem 1, $f \circ \widetilde{\alpha}$ is ultraweakly
continuous on $\textsf{A}^{-}$. It results that $\widetilde{\alpha}$ is ultraweakly continuous on $\textsf{A}^{-}$, and then $\alpha=\widetilde{\alpha}|_{\textsf{A}}$ is ultraweakly continuous on $\textsf{A}$.\par
With $\textsf{M}=\alpha(\textsf{A})$, we have $(\textsf{M})_{1}=\alpha((\textsf{A})_{1})$. Since $\widetilde{\alpha}$ is ultraweakly continuous and $(\textsf{A}^{-})_{1}$ is ultraweakly compact,  $\widetilde{\alpha}((\textsf{A}^{-})_{1})$ is an ultraweakly compact set containing $(\textsf{M})_{1}$ and then containing its closure $(\textsf{M}^{-})_{1}$. Thus $\textsf{M}^{-}\subseteq \widetilde{\alpha}(\textsf{A}^{-})$ and therefore $\widetilde{\alpha}(\textsf{A}^{-})=\textsf{M}^{-}=\alpha(\textsf{A})^{-}$, by the ultraweak continuity of $\widetilde{\alpha}$.
\end{proof}\par
In the proof of the following theorem, we apply the fact that the weak-operator closed left ideals in a von Neumann algebra $R$ are left principal ideals of the form $RP$ for a projection $P$ in $R$.
\begin{theorem}{3}
\emph{There is a projection $E$ in the center of $\textsf{B}^{-}$, and a $*-$isomorphism $\varphi:\textsf{B}^{-}E\longrightarrow\textsf{B}_{\pi}^{-}$, such that $\varphi(\Pi(T)E)=\pi(T)$ for each $T$ in $\textsf{A}$.}
\end{theorem}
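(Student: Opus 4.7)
The plan is to build $\varphi$ as the quotient of an extension of $\pi\circ\Pi^{-1}$ to the weak-operator closure $\textsf{B}^{-}$. Set $\alpha=\pi\circ\Pi^{-1}:\textsf{B}\longrightarrow B(H_{\pi})$; since $\Pi$ is a $*$-isomorphism this is a well-defined $*$-homomorphism with range $\textsf{B}_{\pi}$. If I can extend $\alpha$ to an ultraweakly continuous $*$-homomorphism $\widetilde{\alpha}:\textsf{B}^{-}\longrightarrow B(H_{\pi})$, then $\ker\widetilde{\alpha}$ will be a weak-operator closed two-sided ideal of the von Neumann algebra $\textsf{B}^{-}$, and both the central projection $E$ and the isomorphism $\varphi$ can be read off from that kernel.

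The first task is to verify ultraweak continuity of $\alpha$, so that Lemma 2 applies. Given any ultraweakly continuous functional $g$ on $B(H_{\pi})$, the composition $g\circ\alpha$ is a bounded linear functional on $\textsf{B}$; by Lemma 1 it extends to an ultraweakly continuous functional on $\textsf{B}^{-}$, and is therefore already ultraweakly continuous on $\textsf{B}$ (whose ultraweak topology is that inherited from $B(\bigoplus_{\rho}H_{\rho})$). Since the ultraweak topology on $B(H_{\pi})$ is the weak-$*$ topology from its predual, this is enough to conclude that $\alpha$ itself is ultraweakly continuous. Lemma 2 then produces an ultraweakly continuous $*$-homomorphism $\widetilde{\alpha}:\textsf{B}^{-}\longrightarrow B(H_{\pi})$ extending $\alpha$, with $\widetilde{\alpha}(\textsf{B}^{-})=\alpha(\textsf{B})^{-}=\textsf{B}_{\pi}^{-}$.

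Let $\mathcal{I}=\ker\widetilde{\alpha}$: this is an ultraweakly closed, hence weak-operator closed, two-sided ideal of $\textsf{B}^{-}$ (ultraweak and WOT closures agree on convex subsets of a von Neumann algebra). The fact recalled before the theorem then gives $\mathcal{I}=\textsf{B}^{-}P$ for some projection $P\in\textsf{B}^{-}$. Two-sidedness forces $PT\in\textsf{B}^{-}P$ for every $T\in\textsf{B}^{-}$, i.e.\ $PT=PTP$; taking adjoints gives $TP=PTP$ as well, so $P$ commutes with every element of $\textsf{B}^{-}$ and is central. Set $E=I-P$, also a central projection.

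Finally, define $\varphi=\widetilde{\alpha}|_{\textsf{B}^{-}E}$. For any $T\in\textsf{B}^{-}$, $\widetilde{\alpha}(T)=\widetilde{\alpha}(TE)+\widetilde{\alpha}(TP)=\widetilde{\alpha}(TE)=\varphi(TE)$, so $\varphi$ surjects onto $\textsf{B}_{\pi}^{-}$. If $\varphi(TE)=0$, then $TE=SP$ for some $S\in\textsf{B}^{-}$, and right-multiplication by $E$ gives $TE=SPE=0$ since $EP=0$, so $\varphi$ is injective. Because $\alpha$ is unital, $\widetilde{\alpha}(I)=I_{H_{\pi}}$, whence $\widetilde{\alpha}(E)=I_{H_{\pi}}-\widetilde{\alpha}(P)=I_{H_{\pi}}$, and for $T\in\textsf{A}$,
\[
\varphi(\Pi(T)E)=\widetilde{\alpha}(\Pi(T))\,\widetilde{\alpha}(E)=\alpha(\Pi(T))=\pi(T).
\]
The step I expect to be the main obstacle is establishing ultraweak continuity of $\alpha$: this is precisely where the universal property of $\Pi$ enters, via Lemma 1. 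For a generic representation, bounded functionals need not extend normally to the weak-operator closure, and this is exactly why only the compression $\textsf{B}^{-}E$, rather than all of $\textsf{B}^{-}$, can be isomorphic to $\textsf{B}_{\pi}^{-}$ in general.
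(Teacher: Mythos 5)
Your proposal is correct and follows essentially the same route as the paper: compose $\pi$ with $\Pi^{-1}$, use Lemma 1 to verify ultraweak continuity, extend by Lemma 2, identify the kernel with $\textsf{B}^{-}P$ for a central projection $P$, and restrict $\widetilde{\alpha}$ to $\textsf{B}^{-}E$ with $E=I-P$. The only differences are that you supply explicitly the centrality argument for $P$ and the injectivity computation, both of which the paper leaves terse.
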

\begin{proof}
Since $\Pi$ is a faithful representation, $\mu=\pi\circ \Pi^{-1}:\textsf{B}\longrightarrow\textsf{B}_{\pi}$ is a $*-$homomorphism and then is bounded. If $f$ is an ultraweakly continuous linear functional on $\textsf{B}_{\pi}$, the linear functional $f\circ \mu$ on $\textsf{B}$ is bounded and thus, by applying the argument of Lemma 1, is ultraweakly continuous. It results that $\mu$ is ultraweakly continuous (see Remark 1). Therefore, $\mu$ extends to an ultraweakly continuous $*-$homomorphism $\widetilde{\mu}:\textsf{B}^{-}\longrightarrow\textsf{B}_{\pi}^{-}$, by Lemma 2. The kernel of $\widetilde{\mu}$ has the form $\textsf{B}^{-}P$ for some projection in the center of $\textsf{B}^{-}$. Let $E=I-P$, then for each $T$ in $\textsf{B}^{-}$,
\begin{align}
\widetilde{\mu}(T)=\widetilde{\mu}(TI)=\widetilde{\mu}(TE+TP)=\widetilde{\mu}(TE),
\end{align}
We define $\varphi=\widetilde{\mu}|_{\textsf{B}^{-}E}$ that according to (5), has the same range as $\widetilde{\mu}$. Since $\textsf{B}^{-}E\cap\textsf{B}^{-}P=\{0\}$, therefore $\varphi$ is one-to-one, and thus is a $*-$isomorphism from $\textsf{B}^{-}E$ onto $\textsf{B}_{\pi}^{-}$. For each $T$ in $\textsf{A}$,
\begin{align*}
\varphi(\Pi(T)E)=\widetilde{\mu}(\Pi(T)E)=\widetilde{\mu}(\Pi(T)E+\Pi(T)P)=\widetilde{\mu}(\Pi(T))=\mu(\Pi(T))=\pi(T).
\end{align*}
\end{proof}
\begin{remark}{1}
Given $x, y\in H$, let $f(T)=\langle T(x), y\rangle$, where $T\in\textsf{B}_{\pi}^{-}$. By extending $f\circ \mu$ to an ultraweakly continuous linear functional $g$ on $\textsf{B}^{-}$, or equivalently weak-operator continuous on $(\textsf{B}^{-})_{1}$, it results that $\mu$ is ultraweakly continuous.
\end{remark}\par
Let $f$ be a bounded linear functional on $\textsf{A}$. Then $g=f\circ\Pi^{-1}$ is an ultraweakly continuous linear functional on $\textsf{B}$. According to Lemma 1, we can extend $g$ to $\widetilde{g}$ such that
\begin{align}
\widetilde{g}(\Pi(T))=f(T),\quad T\in\textsf{A}.
\end{align}
Let $E$ be the projection in Theorem 3 when $\pi$ is the inclusion mapping and define $\lambda_{E}$ on $\textsf{A}^{*}$ in the following manner, $\lambda_{E}(f)(T):=\widetilde{g}(\Pi(T)E)$, $T\in\textsf{A}$.
\begin{theorem}{4}
\emph{Let $E$ be the projection in the center of $\textsf{B}^{-}$ when $\pi:\textsf{A}\longrightarrow B(H)$ is the inclusion mapping in Theorem 3. A bounded linear functional $f$ on $\textsf{A}$  is ultraweakly continuous if and only if $\lambda_{E}(f)=f$.}
\end{theorem}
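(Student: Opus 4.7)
The strategy is to establish first that $\lambda_{E}(f)$ is automatically ultraweakly continuous for \emph{every} bounded linear functional $f$ on $\textsf{A}$, which will deliver the ``if'' implication at once. Unravelling the definition, $\lambda_{E}(f)(T)=\widetilde{g}(\Pi(T)E)$, where $\widetilde{g}$ is the unique ultraweakly continuous extension of $g=f\circ\Pi^{-1}$ produced by Lemma 1. Since $E$ lies in the center of $\textsf{B}^{-}$, right multiplication by $E$ is ultraweakly continuous from $\textsf{B}^{-}$ onto $\textsf{B}^{-}E$. Composing with the $*$-isomorphism $\varphi:\textsf{B}^{-}E\longrightarrow\textsf{A}^{-}$ from Theorem 3, which is ultraweakly continuous because every $*$-isomorphism between von Neumann algebras is an ultraweak homeomorphism, one obtains an ultraweakly continuous functional on $\textsf{A}^{-}$ whose restriction to $\textsf{A}$ coincides with $\lambda_{E}(f)$. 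By Theorem 1 this restriction is ultraweakly continuous on $\textsf{A}$, so the equality $\lambda_{E}(f)=f$ immediately forces $f$ to be ultraweakly continuous.

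For the converse, assume $f$ is ultraweakly continuous and extend it uniquely to $\widetilde{f}$ on $\textsf{A}^{-}$ via Theorem 1. The next step is to define
\begin{align*}
h'(S):=\widetilde{f}(\varphi(SE)),\qquad S\in\textsf{B}^{-}.
\end{align*}
Right multiplication by $E$ is ultraweakly continuous on $\textsf{B}^{-}$, and $\widetilde{f}\circ\varphi$ is ultraweakly continuous on $\textsf{B}^{-}E$, so $h'$ is an ultraweakly continuous linear functional on $\textsf{B}^{-}$. Using the identity $\varphi(\Pi(T)E)=\pi(T)=T$ from Theorem 3 (with $\pi$ the inclusion), one checks that for each $T\in\textsf{A}$,
\begin{align*}
h'(\Pi(T))=\widetilde{f}(T)=f(T)=g(\Pi(T))=\widetilde{g}(\Pi(T)),
\end{align*}
so $h'$ and $\widetilde{g}$ agree on the $C^{*}$-subalgebra $\textsf{B}=\Pi(\textsf{A})$. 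The uniqueness of the ultraweakly continuous extension supplied by Lemma 1 then forces $h'=\widetilde{g}$ on all of $\textsf{B}^{-}$. Evaluating at $\Pi(T)E$ and using $E^{2}=E$ yields $\widetilde{g}(\Pi(T)E)=h'(\Pi(T)E)=\widetilde{f}(T)=f(T)$, which is exactly $\lambda_{E}(f)(T)=f(T)$.

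The only mildly delicate ingredient is the fact that the $*$-isomorphism $\varphi$ between the von Neumann algebras $\textsf{B}^{-}E$ and $\textsf{A}^{-}$ is automatically a homeomorphism for the ultraweak topologies; this is standard but must be invoked explicitly, since it is precisely what allows the ultraweak continuity of $\widetilde{f}$ to be transported across the identification $\varphi$. Beyond this, the argument is bookkeeping with the two canonical ultraweakly continuous extensions produced by Lemma 1, together with the central decomposition $I=E+P$ coming from the proof of Theorem 3.
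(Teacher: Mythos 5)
Your proof is correct and follows essentially the same route as the paper's: the ``only if'' direction rests on the identity $\widetilde{g}(S)=\widetilde{f}(\varphi(SE))$ on $\textsf{B}^{-}$ (you obtain it via uniqueness of the ultraweakly continuous extension, the paper by extending from $\textsf{B}$ by continuity --- the same argument) followed by the substitution $S\mapsto SE$, and the ``if'' direction is exactly the paper's observation that $\lambda_{E}(f)=\widetilde{g}\circ\varphi^{-1}|_{\textsf{A}}$ is ultraweakly continuous because $\varphi$ is an ultraweak homeomorphism. Your only added touch is stating explicitly that $\lambda_{E}(f)$ is ultraweakly continuous for \emph{every} bounded $f$, which is a slightly cleaner way to package the converse but not a different proof.
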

\begin{proof}
Suppose that $f$ is ultraweakly continuous. Then $f$ extends to an ultraweakly continuous linear functional $\overline{f}$ on $\textsf{A}^{-}$ . According to Theorem 3, we have
\begin{align}
T=\pi(T)=\varphi(\Pi(T)E),\quad T\in\textsf{A}.
\end{align}
It follows from (6), (7) that
\begin{align*}
\widetilde{g}(\Pi(T))=f(T)=\overline{f}(T)=\overline{f}(\varphi(\Pi(T)E)),\quad T\in \textsf{A}.
\end{align*}
Accordingly,
\begin{align}
\widetilde{g}(S)=\overline{f}(\varphi(SE)),\quad S\in \textsf{B}.
\end{align}
Because of the ultraweak continuity of $\widetilde{g}, \overline{f},$ and $\varphi$, the equality (8) holds for each $S\in\textsf{B}^{-}$. When $S$ is replaced by $SE$, the right-hand side of (8) is unchanged, therefore
\begin{align*}
\widetilde{g}(SE)=\widetilde{g}(S),\quad S\in \textsf{B}^{-}.
\end{align*}
Particularly,
\begin{align*}
f(T)=\widetilde{g}(\Pi(T))=\widetilde{g}(\Pi(T)E)=\lambda_{E}(f)(T),\quad T\in \textsf{A}.
\end{align*}
Now, suppose that $\lambda_{E}(f)=f$ holds. It follows from (7) that $f=\widetilde{g}\circ \varphi^{-1}|_{\textsf{A}}$ and the ultraweak continuity of $\widetilde{g}$ and $\varphi^{-1}$ results that $f$ is ultraweakly continuous.
\end{proof}\par
The mapping $f\longmapsto \lambda_{E}(f)$ is a norm-decreasing projection from $\textsf{A}^{*}$ onto a closed subspace $\textsf{A}^{N}$ consisting of ultraweakly continuous linear functionals or, by Theorem 1, normal states on $\textsf{A}$. Non-zero elements of the complementary closed subspace
\begin{align*}
\textsf{A}^{S}=\{f\in \textsf{A}^{*}: \lambda_{E}(f)=0\},
\end{align*}
are described as \emph{singular} states on $\textsf{A}$. In fact, this projection maps the dual space onto the space of normal states parallel to the space of singular states.
\begin{corollary}{1}([5])
\emph{Each $f$ in $\textsf{A}^{*}$ can be decomposed, uniquely, in the form $f=f_{N}+f_{S}$, with $f_{N}$ in $\textsf{A}^{N}$ and $f_{S}$ in $\textsf{A}^{S}$. }
\end{corollary}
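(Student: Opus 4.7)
The plan is to read off the decomposition directly from the fact that $\lambda_{E}$ is a linear projection on $\textsf{A}^{*}$ with range $\textsf{A}^{N}$ and kernel $\textsf{A}^{S}$, as set up in the remarks following Theorem 4. Concretely, given $f\in \textsf{A}^{*}$, I would define
\begin{align*}
f_{N}:=\lambda_{E}(f),\qquad f_{S}:=f-\lambda_{E}(f),
\end{align*}
and verify that this does the job.

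For existence, the point is to show $f_{N}\in \textsf{A}^{N}$ and $f_{S}\in \textsf{A}^{S}$. By Theorem 4, $f_{N}$ is ultraweakly continuous iff $\lambda_{E}(f_{N})=f_{N}$, so what is really needed here is the idempotence $\lambda_{E}\circ \lambda_{E}=\lambda_{E}$. Once that holds, one immediately gets $\lambda_{E}(f_{N})=\lambda_{E}(\lambda_{E}(f))=\lambda_{E}(f)=f_{N}$, placing $f_{N}$ in $\textsf{A}^{N}$, and by linearity $\lambda_{E}(f_{S})=\lambda_{E}(f)-\lambda_{E}(f_{N})=f_{N}-f_{N}=0$, placing $f_{S}$ in $\textsf{A}^{S}$.

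For uniqueness, suppose $f=g_{N}+g_{S}$ with $g_{N}\in \textsf{A}^{N}$ and $g_{S}\in \textsf{A}^{S}$. Applying $\lambda_{E}$ and using linearity, together with $\lambda_{E}(g_{N})=g_{N}$ (Theorem 4) and $\lambda_{E}(g_{S})=0$ (definition of $\textsf{A}^{S}$), yields $\lambda_{E}(f)=g_{N}$. Hence $g_{N}=f_{N}$ and consequently $g_{S}=f_{S}$, proving uniqueness.

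The only nontrivial step is the idempotence of $\lambda_{E}$; this is the main obstacle since it is the load-bearing fact behind the paragraph preceding the corollary. Unpacking the definition $\lambda_{E}(f)(T)=\widetilde{g}(\Pi(T)E)$ with $g=f\circ \Pi^{-1}$, one checks that the map $S\longmapsto \widetilde{g}(SE)$ is already ultraweakly continuous on $\textsf{B}^{-}$ (as a composition of the ultraweakly continuous $\widetilde{g}$ with right multiplication by $E$), so by the uniqueness of the ultraweakly continuous extension in Lemma 1, the functional corresponding to $\lambda_{E}(f)$ on $\textsf{B}^{-}$ is exactly $S\mapsto\widetilde{g}(SE)$. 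Applying the construction a second time then yields $\widetilde{g}(\Pi(T)E\cdot E)=\widetilde{g}(\Pi(T)E)$ by $E^{2}=E$, i.e. $\lambda_{E}(\lambda_{E}(f))=\lambda_{E}(f)$. With this idempotence in hand, the corollary is just the standard fact that every linear projection $P$ on a vector space $V$ yields a unique decomposition $V=\mathrm{Im}(P)\oplus \ker(P)$.
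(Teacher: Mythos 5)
Your proposal is correct and follows essentially the same route as the paper, whose entire proof is the single line $f_{N}=\lambda_{E}(f)$, $f_{S}=(I-\lambda_{E})(f)$. The idempotence of $\lambda_{E}$, which you verify via $E^{2}=E$ together with the uniqueness of the ultraweakly continuous extension, is precisely the fact the paper asserts without argument in the paragraph preceding the corollary (where $\lambda_{E}$ is declared to be a projection onto $\textsf{A}^{N}$ with kernel $\textsf{A}^{S}$), so you have simply supplied the detail the paper leaves implicit.
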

\begin{proof}
Let
\begin{align*}
f_{N}=\lambda_{E}(f),\quad f_{S}=(I-\lambda_{E})(f).
\end{align*}
\end{proof}\par
As a result of Corollary 1 , every pure state is a normal or a singular state. According to Theorem 1, singular states annihilate all one-dimensional projections, and therefore all compact operators. Let $H$ be an infinite dimensional separable Hilbert space, $T \in B(H)$ a positive compact operator, and $\{\eta (k, T)\}_{k=1}^{\infty}$ an eigenvalue sequence of $T$ counted with multiplicities and arranged in a non-increasing order. Suppose that
\begin{align*}
\mathrm{sup}_{n\geq 1}\;\frac{1}{\mathrm{log}(1+n)}\sum_{k=1}^{n}\eta(k, T)<\infty.
\end{align*}
If $\rho$ is a state on the space $l_{\infty}$ of bounded sequences such that the following conditions hold,\\
(i) $\rho$ vanishes on finitely supported sequences, i.e. $S=\{(x_{1}, x_{2},...) : \exists N\; \mathrm{s. t.}\; x_{n}=0\; \forall n\geq N\}$,\\
(ii) $\rho(x_{1}, x_{2},...)=\rho(x_{1}, x_{1}, x_{2}, x_{2},...)$ for each $x\in l_{\infty}$, then
\begin{align*}
\mathrm{Tr}_{\rho}(T):=\rho\left(\frac{1}{\mathrm{log}(1+n)}\sum_{k=1}^{n}\eta(k, T)\right),
\end{align*}
is a non-trivial state that vanishes all compact operators and thereby is a singular state [13].
\subsection{Physical interpretation and example}
We can think of a non-relativistic particle localized at a sharp point, as witnessed by the expectations of all continuous functions of position. Extending from this algebra to all bounded operators, we get a singular state with sharp position, but “inﬁnite momentum”, i.e. the probability assigned to ﬁnding the momentum in any given ﬁnite interval is zero [14]. This shows that the probability measure on the momentum space induced by such a state is only ﬁnitely additive, but not $\sigma-$additive. This is typical for singular states.
\subsubsection{Arising in quantum dynamics}
Consider the evolution of the quantum many-body system of Bose particles with the following Hamiltonian:
\begin{gather*}
\mathbb{H}=\int\psi^{*}(x)h\psi(x)dx+\frac{1}{2}\int\upsilon(x-y)\psi^{*}(x)\psi^{*}(y)\psi(x)\psi(y)dxdy,
\end{gather*}
where (i) $\psi$ and $\psi^{*}$ are the annihilation and creation operators, respectively, satisfying the canonical commutation relations (CCR, see Section 6.1), as follows:
\begin{gather*}
[\psi(x), \psi^{*}(y)]=\delta(x-y),\quad [\psi(x), \psi(y)]=[\psi^{*}(x), \psi^{*}(y)]=0.
\end{gather*}
(ii) $h:=-\triangle+V(x)$ acting on the variable $x$, defined on the symmetric (or Bosonic) Fock space $\mathcal{F}:=\bigoplus_{n=0}^{+\infty}(L^{2}(\mathbb{R}^{d}, \mathbb{C}))^{\bigotimes_{s}n}$, where $-\triangle$ and $V(x)$ are the Laplacian and the external potential, respectively. Under some conditions on $\upsilon$ and $V$, it follows that $\mathbb{H}$ is a self-adjoint operator [30].

Here states, as positive linear functionals $w$ on the Weyl CCR algebra, correspond either to nonzero densities and temperatures, or to a finite number of particles as in the case of Bose-Einstein condensate experiments [31], in traps. In the latter case, states are represented by density operators, i.e. there is a density operator $D$ on $\mathcal{F}$, such that for all observables $A$ we have
\begin{gather*}
w(A)=\mathrm{tr}(DA).
\end{gather*}
As an example for singular states, we can refer to the equilibrium state of a free Bose gas in infinite space at finite density and temperature. This state is singular (with respect to Fock space) because the probability for finding only a finite number of particles in it, is zero.
\section{An application in quantum information theory}
The no-cloning theorem says that there is no unitary operator that makes a perfect copy of an unknown pure state [15]. However perfect quantum cloning is impossible, it is possible to perform imperfect cloning. Accordingly, several studies on approximate duplication of quantum states have been done, e.g. see these impressive works [16, 17, 18].
\subsection{Phase space and Weyl operators}
A \emph{phase space} consists of a real vector space $V$ of dimension $2n$ ($n$ is the number of modes) and an antisymmetric bilinear form $\beta:V\times V \longrightarrow \mathbb{R}$ that defines the scalar product $\beta(x, y)=\sum_{k, l=1}^{2n} x_{k}^t \, \beta_{k, l} \, y_{l}$ ($^{t}$ is transposition) where $\beta_{k, l}=\beta(e_{k}, e_{l})$ and $\{e_{k}\}$ is an orthonormal basis for $V$.

A quantum system is described by the \emph{canonical commutation relations} (CCR) between the fields operators $R_{k}$ ($k=1, 2, ..., 2n$) as $[R_{k}, R_{l}]=i \beta_{k, l} I$ where $I$ is the identity operator. In quantum mechanics, fields operators are written in the vector form as $\overrightarrow{R}=(Q_{1}, P_{1}, Q_{2}, P_{2}, ..., Q_{n}, P_{n})$ or $\overrightarrow{R}=(Q_{1},Q_{2}, ..., Q_{n}, P_{1}, P_{2}, ..., P_{n})$ where $Q_{i}, P_{i}$ are position and momentum operators of each mode, respectively $(i=1, 2, ..., n)$.

\emph{Weyl operators} as a family of bounded and unitary operators are defined in the following manner $W_{x}=e^{ix^{t} \textbf{B} \overrightarrow{R}}$ where $x\in V$, $\textbf{B}=[\beta(e_{k}, e_{l})]$ and $W_{0}=I$ [17]. For $x=(q_{1}, p_{1}, q_{2}, p_{2}, ..., q_{n}, p_{n})$ the standard form of the Weyl operators in quantum mechanics is written as $W_{x}=\mathrm{exp}(i\sum_{k=1}^{n}(q_{k}P_{k}-p_{k}Q_{k}))$. It follows from the CCR that the Weyl operators satisfy the following relations:
\begin{align*}
W_{x}W_{y}=e^{\frac{-i\beta(x, y)}{2}}W_{x+y},\quad W_{x}W_{y}=e^{-i\beta(x, y)}W_{y}W_{x},\quad W_{x}^{*}=W_{-x}.
\end{align*}

In our study, the Weyl operators have the \emph{irreducibility property}, i.e. if for every $x$ in $V$, $[W_{x}, A]=0$ then $A\propto I$.
\subsection{Covariant cloning}
Let $(\mathbb{R}^{2}, \beta_{{\mathrm{in}}})$ be the phase space of the input system and $(\mathbb{R}^{2n}, \beta_{\mathrm{out}})$ be the phase space of the output system including $n$ subsystems ($n$ modes) where
\begin{align*}
\beta_{\mathrm{out}}(x, y)=\sum_{j=1}^{n}\beta_{{\mathrm{in}}}(x_{j}, y_{j}),\quad x=(x_{1}, x_{2},...,x_{n}), \quad y=(y_{1}, y_{2}, ..., y_{n}).
\end{align*}

Let $A_{{\mathrm{in}}}$ and $A_{\mathrm{out}}$ be the CCR algebras of observables on the input and output systems, respectively. Accordingly, the state spaces are $S(A_{{\mathrm{in}}})=B^{*}(H_{\mathrm{in}}), S(A_{\mathrm{out}})=B^{*}(H_{\mathrm{out}})$ where $H_{\mathrm{in}}=L^2(\mathbb{R}^{2})$ and $H_{\mathrm{out}}=H_{{\mathrm{in}}}^{\bigotimes n}\simeq L^2(\mathbb{R}^{2n})$.

A \emph{1 to $n$ cloner} is a quantum channel, i.e. completely positive and trace-preserving map, that in the Heisenberg picture, maps $A_{\mathrm{out}}$ onto $A_{{\mathrm{in}}}$, $\Phi: A_{\mathrm{out}}\longrightarrow A_{{\mathrm{in}}}$ and in the Schr\"{o}dinger picture, maps $S(A_{{\mathrm{in}}})$ onto $S(A_{\mathrm{out}})$, $\Phi_{*}: S(A_{{\mathrm{in}}})\longrightarrow S(A_{\mathrm{out}})$.

To compare the input and output states of the cloner we need a functional that determines how well they coincide. For two states $\rho_{1}, \rho_{2}$, this functional is defined as
\begin{gather}
F(\rho_{1}, \rho_{2})=\left(\mathrm{tr}\left[(\rho_{1}^{\frac{1}{2}}\rho_{2}\rho_{1}^{\frac{1}{2}})^{\frac{1}{2}}\right]\right)^{2},\notag
\end{gather}
and is called the \emph{fidelity} [19]. In our study, at least one of the states is pure and this definition is reduced to $F(\rho_{1}, \rho_{2})=\mathrm{tr}[\rho_{1}\rho_{2}]$. Accordingly, for an input coherent state $\rho=|\alpha\rangle\langle\alpha|$, the (joint) fidelity between the output state of the cloner, $\Phi_{*}(\rho)$ and $\rho^{\bigotimes n}$ is defined as the expectation value of the operator $\rho^{\bigotimes n}$ in the output state of the cloner as follows  $F(\Phi, \rho)=\mathrm{tr}\left[\Phi_{*}(\rho)\rho^{\bigotimes n}\right]$. Let $C=\{|\alpha\rangle\langle \alpha|: \alpha\in \mathbb{R}^{2}\}$, we choose $F(\Phi)=\mathrm{inf}_{\rho\in C}\,F(\Phi, \rho)$ as the worst-case overlap between the input and output states.

For $x \in \mathbb{R}^{2}$, we define the \emph{shift cloner} $\Phi_{*}^{x}$ by
\begin{align*}
\Phi_{*}^{x}(\rho)={W_{x}^{\bigotimes n}}^{*}\Phi_{*}(W_{x} \rho W_{x}^{*})W_{x}^{\bigotimes n}.
\end{align*}
A cloner $\Phi_{*}$ is called \emph{covariant} if $\Phi_{*}^{x}(\rho)=\Phi_{*}(\rho)$ for every $x, \rho$.
\begin{remark}{2}
For a covariant cloner $\Phi_{*}$ and given observable $O$ and state $\rho$, we have
\begin{align*}
\mathrm{tr}\left[\rho\,\Phi(O)\right]=\mathrm{tr}\left[\Phi_{*}(\rho)\,O\right]=\mathrm{tr}\left[{W_{x}^{\bigotimes n}}^{*}\Phi_{*}(W_{x} \rho W_{x}^{*})W_{x}^{\bigotimes n}O\right]=\mathrm{tr}\left[\rho W_{x}^{*} \Phi(W_{x}^{\bigotimes n}O{W_{x}^{\bigotimes n}}^{*})W_{x}\right],
\end{align*}
hence, it results that $\Phi(O)=W_{x}^{*} \Phi(W_{x}^{\bigotimes n}O{W_{x}^{\bigotimes n}}^{*})W_{x}$ and it means that $\Phi$ is also covariant.
\end{remark}

Let $\Phi_{*}$ be a general cloner in the form $\Phi_{*}(\rho)=(\Phi_{*}^{\mathcal{N}}\bigotimes \Phi_{*}^{\mathcal{S}})(\rho)$, where $\Phi_{*}^{\mathcal{N}}$ is a normal cloner, i.e. the respective output state is a positive trace-class operator, and $\Phi_{*}^{\mathcal{S}}$ is a singular cloner, i.e. the respective output state is singular. Consequently, a natural question gives rise: What is the structure of the optimal covariant cloner? This question will be answered below in case $n=2$, for 1 to 2 covariant cloners.

We define $\mathcal{K}\subset B(H_{\mathrm{out}})$ as the subalgebra generated by the identity operator and operators of the form $I\bigotimes K$ and $K \bigotimes I$, where $K$ is a compact operator and $I$ is the identity operator. A general element $\mathcal{A}$ of $\mathcal{K}$ is represented by the notation $\mathcal{A}=\sum_{M\subseteq \{1, 2\}}\,K_{M}\bigotimes I^{\bigotimes M^{c}}$ in which $K_{M}$ is the compact component that denotes tensor product of compact operators in the tensor factors indexed by $M$ and $I^{\bigotimes M^{c}}$ denotes tensor product of the rest. For a state $\omega$ on the output states, $\omega$ can be decomposed into parts which are labeled by a set of tensor factors $N$ on which the state is normal and hence described by a positive trace-class operator $\Lambda_{N}$ and on $N^{c}$ is singular, as follows:
\begin{align}
\omega(\mathcal{A})=\omega\left(\sum_{M\subseteq \{1, 2\}}\,K_{M}\bigotimes I^{\bigotimes M^{c}}\right)=\sum_{N\subseteq \{1, 2\}}\sum_{M\subseteq N}\mathrm{tr}\left[\Lambda_{N}\left(K_{M}\bigotimes I^{\bigotimes (N-M)}\right)\right],
\end{align}
we take into account that the expectation value of compact operators in a singular output state is zero, hence in the above sum, we consider only subsets $M$ of $N$. Since $\omega$ is a state then the normalization condition gives rise to
\begin{align}
\sum_{N\subseteq\{1, 2\}} \mathrm{tr}[\Lambda_{N}]=1.
\end{align}

Let $\omega=\Phi_{*}(\rho)|_{\mathcal{K}}$ and define $\Phi_{*}^{N}:\rho\mapsto \Lambda_{N}$ that maps $\rho$ to the unique positive trace-class operator $\Lambda_{N}$ from (9). Since $\Phi_{*}$ is covariant, then $\Phi_{*}^{N}$ is so. Although $\Phi_{*}(\rho)$ is normalized, but $\Phi_{*}^{N}(\rho)$ is not normalized. Let $\Phi^{N}(I)=X_{N}$, then it follows from Remark 2 and (10) that for every $\rho$,
\begin{align*}
\mathrm{tr}[\rho X_{N}]=\mathrm{tr}[\Phi_{*}^{N}(\rho)]\leq 1,
\end{align*}
and since $\Phi_{*}^{N}$ is covariant, then $X_{N}$ commutes with all Weyl operators. Hence, it follows from the irreducibility property that $X_{N}=c_{N}I$ for some constant $0<c_{N}\leq 1$, which does not depend on the input state $\rho$. We define $\Psi_{*}^{N}(\rho)=\frac{\Phi_{*}^{N}(\rho)}{c_{N}}=\frac{\Lambda_{N}}{c_{N}}$, which is a normal cloner (since $\frac{\Lambda_{N}}{c_{N}}$ is a density operator).

Now we calculate the fidelity: Since $\Phi_{*}$ is covariant, by applying Remark 2 and the equality $W_{\alpha}|0\rangle=|\alpha\rangle$,
\begin{align*}
F(\Phi)=\mathrm{inf}_{\rho\in C}\,F(\Phi,\rho)=F(\Phi,|0\rangle\langle 0|)=\Phi_{*}(|0\rangle\langle 0|)(|\alpha\rangle\langle\alpha|^{\bigotimes 2}).
\end{align*}
For $N=\{1, 2\}$ by applying (9) and compactness of $|\alpha\rangle\langle\alpha|^{\bigotimes 2}$,
\begin{align*}
F(\Phi)=\mathrm{tr}[\Lambda_{N}\,|\alpha\rangle\langle\alpha|^{\bigotimes 2}]=c_{N}\mathrm{tr}\left[\Psi_{*}^{N}(|0\rangle\langle 0|)|\alpha\rangle\langle\alpha|^{\bigotimes 2}\right]=c_{N}F(\Psi^{N}).
\end{align*}
To achieve the optimal cloner, we need to increase the fidelity, hence $c_{N}=1$ and by (10), for $N\varsubsetneq \{1, 2\}$, $c_{N}=0$. It shows that $\Phi_{*}=\Psi_{*}^{N}$, where $N=\{1, 2\}$ and it means that the optimal cloner can not have a singular component and it is purely normal.

\section{Discussion}
Let $H$ be a separable infinite dimensional Hilbert space. We say that a pure state $\rho$ on $B(H)$ is \emph{diagonalizable}, if $\rho(T)=\mathrm{lim}_{\mathcal{U}}\langle T(e_{n}), e_{n}\rangle$ for some orthonormal basis $\{e_{n}\}$ of $H$ and some ultrafilter $\mathcal{U}$ in $\mathbb{N}$. An \emph{atomic masa} (maximal abelian self-adjoint subalgebra) is the set of all operators on $H$ that are diagonalized with respect to some orthonormal basis.
\begin{theorem}{5}([9, 10])
For every pure state $\rho$ on $B(H)$, there is a basis $\{e_{n}\}$ of $H$ and measure $\mu$ such that
\begin{gather}
\rho(T)=\int_{\mathbb{N}}\langle T(e_{n}), e_{n}\rangle d\mu.\notag
\end{gather}
\end{theorem}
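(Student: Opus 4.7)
The plan is to split into cases using Corollary 1, which guarantees that every pure state is either normal or singular, and to handle each case separately, reducing both to an integral representation against a (possibly finitely additive) probability measure on $\mathbb{N}$.

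First I would dispose of the normal case. If $\rho$ is a normal pure state, Theorem 1 gives $\rho(T)=\mathrm{tr}(\Lambda T)$ for a density operator $\Lambda$, and purity forces $\Lambda$ to be a rank-one projection $|e\rangle\langle e|$ for some unit vector $e\in H$. Extending $e$ to an orthonormal basis $\{e_{n}\}_{n\in\mathbb{N}}$ of the separable Hilbert space $H$ with $e_{1}=e$ and taking $\mu=\delta_{1}$, the Dirac measure at $1$, yields $\rho(T)=\langle T(e_{1}),e_{1}\rangle=\int_{\mathbb{N}}\langle T(e_{n}),e_{n}\rangle\,d\mu$ at once.

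For singular pure states I would invoke the deeper diagonalizability result from [9, 10] recalled just above the statement: there exist an orthonormal basis $\{e_{n}\}$ of $H$ and a free ultrafilter $\mathcal{U}$ on $\mathbb{N}$ such that $\rho(T)=\lim_{\mathcal{U}}\langle T(e_{n}),e_{n}\rangle$ for every $T\in B(H)$. To convert the ultralimit into an integral, define a finitely additive $\{0,1\}$-valued probability measure $\mu_{\mathcal{U}}$ on the power set of $\mathbb{N}$ by $\mu_{\mathcal{U}}(A)=1$ iff $A\in\mathcal{U}$. A standard simple-function approximation argument shows that for any bounded $f:\mathbb{N}\to\mathbb{C}$, $\lim_{\mathcal{U}}f(n)=\int_{\mathbb{N}}f\,d\mu_{\mathcal{U}}$; applied to the bounded function $f(n)=\langle T(e_{n}),e_{n}\rangle$ (bounded by $\|T\|$), this yields the desired representation with $\mu=\mu_{\mathcal{U}}$.

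The hard part is really Step 2, the diagonalizability of singular pure states, which rests on the Anderson-type analysis in [9, 10] and is not at all routine; everything else is formal. A secondary technical point worth being careful about is the sense in which $\mu$ is a ``measure'': in the normal case it is a genuine $\sigma$-additive Borel measure on $\mathbb{N}$, while in the singular case it is only finitely additive, so the integral must be understood in the sense of integration against finitely additive set functions (equivalently, a positive functional on $\ell^{\infty}(\mathbb{N})$). Once this convention is fixed, Steps 1 and 3 splice together to give a uniform integral representation that covers both halves of Corollary 1.
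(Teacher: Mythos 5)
The paper offers no proof of Theorem 5; it is quoted from [9, 10], so there is no internal argument to compare against. Your normal case is correct: a normal pure state on $B(H)$ is a vector state, and the Dirac measure at a single index gives the representation. The conversion of an ultrafilter limit into an integral against the associated $\{0,1\}$-valued finitely additive measure is also standard and unobjectionable.

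The genuine gap is Step 2. You assert that every singular pure state is diagonalizable, i.e. of the form $\rho(T)=\lim_{\mathcal{U}}\langle T(e_{n}),e_{n}\rangle$ for some \emph{orthonormal} basis $\{e_{n}\}$ and free ultrafilter $\mathcal{U}$, and you attribute this to the passage preceding the theorem. But that passage only \emph{defines} diagonalizability; it does not claim every pure state has this property. That claim is exactly Anderson's conjecture, and it fails: the paper itself, two paragraphs after Theorem 5, cites Akemann--Weaver [21] to conclude that (under the continuum hypothesis) there is a singular pure state on $B(H)$ that \emph{cannot} be represented in the Gelfand--Pettis sense by an orthonormal basis. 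So the route you take for the singular case would, if valid, contradict the paper's own subsequent discussion. The theorem survives because the ``basis'' $\{e_{n}\}$ in [9, 10] is a system of unit vectors that need not be orthonormal --- the Amosov--Sakbaev construction concerns general systems of vector states --- and producing such a system for an arbitrary pure state is the actual content of the result; your proposal does not address that construction at all.
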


By applying the following notation $\rho(T)=\langle \rho, T\rangle$, and $\rho_{n}(T)=\langle T(e_{n}), e_{n}\rangle$, the result of Theorem 5, gives rise to:
\begin{align*}
\langle \rho, T\rangle=\int_{\mathbb{N}}\langle \rho_{n}, T\rangle d\mu.
\end{align*}
It is called the representation of $\rho$ \emph{in the sense of the Gelfand-Pettis integral}.

According to the \textbf{Kadison-Singer conjecture} which was proved in 2013 [20] every pure state on an atomic masa of $B(H)$ has a unique extension to a pure state on $B(H)$. Nevertheless, assuming the continuum hypothesis, there is a pure state on $B(H)$ whose restriction to any masa is not pure [21]. It follows from this discussion that:\\
There is a (singular) pure state on $B(H)$ which can not be represented in the sense of the Gelfand-Pettis integral by an orthonormal basis.
\subsection{Formulation of quantum mechanics in terms of worlds (continue)}
From the Kadison-Singer conjecture it follows that in addition to vector states, singular states also have unique extensions to $B(H)$. It results that unlike observables which are considered relative to a certain world, states can be (uniquely) extended to the whole system and so of \emph{absolute} meaning. Although singular states do not appear in the conventional formalism of quantum mechanics, but they are defined, based on the notion of world by the Stone-\v{C}ech compactification and are (mathematically) constructed by the notion of Banach limit [22]. Thus, realization of such states will verify the necessity of our conceptual approach to worlds of a quantum system.
\subsubsection{Measurement}
Measurement, i.e. making observation, in a certain world is done by applying a quantum state as an external observation. Let $w$ be a pure state in a world $W=\{e_{j}\}$. Every observable $O$ relative to $W$ takes a definite value $w(O)$, particularly $w_{j}(O)$ is the eigenvalue of $O$ corresponding to eigenvector $e_{j}$ ($w_{j}$ is the vector state corresponding to $e_{j}$). For an observable $O^\prime$ relative to another world $W^{\prime}=\{e_{k}^{\prime}\}$, since $w$ is a state in $W$, the information represented by $w$ is incomplete for $W^\prime$. By the Kadison-Singer conjecture, $w$ has a unique extension to $B(H)$ which is also denoted by $w$. Therefore, the expectation value of the new measurement is $\langle O^{\prime}\rangle_{w}:=w(O^\prime)$. Particularly, if $w=w_{j}$ then
\begin{gather*}
w_{j}(O^{\prime})=\langle e_{j}|O^\prime|e_{j}\rangle=\langle e_{j}|\sum_{k}\lambda_{k}|e_{k}^\prime\rangle\langle e_{k}^\prime|e_{j}\rangle=\sum_{k}\lambda_{k}|\langle e_{j}|e_{k}^\prime\rangle|^{2}.
\end{gather*}
Hence, the probability that $O^\prime$ takes the value $\lambda_{k}$ is $|\langle e_{j}|e_{k}^\prime\rangle|^{2}$. It is the same as the usual Born rule.

A natural question gives rise: What occurs in the measurement process? Let $w$ be a state in a world $W=\{e_{j}\}$ and the observer wants to observe an observable $O^\prime$ relative to another world $W^\prime=\{e_{k}^\prime\}$. We take into account that $O^\prime$ relative to $W^\prime$ has a definite value, then we can assign a state $w^\prime$ representing the information obtained by the observer in the world $W^\prime$. Similarly, $w^\prime$ has a unique extension to $B(H)$ which is also denoted by $w^\prime$. The next measurement will only depend on $w^\prime$ and it does not depend on $w$.

It follows that in the measurement process we have only change of information and in contrary of conventional quantum mechanics no collapse of wave packets occurs [23].
\section{Conclusion}
It is shown the resolution of bounded linear functionals into normal and singular states. In fact, there is a projection in the von Neumann algebra generated by applying the GNS construction which maps the dual space onto the space of normal states parallel to the space of singular states. Singular states are mathematically constructed and like vector states, have unique extensions to the whole system. Realization of singular states verifies the necessity of approaching to worlds of a quantum system. Quantum field theory in curved spacetime is constructed based on a Hilbert space $\mathcal{F}$ which is associated with a bilinear map $\nu$. Our treatment allows us to formulate quantum field theory in a manner which does not require the specification of even a preferred unitary equivalence class of bilinear maps. Consequently, one can define quantum theory without the need to select a preferred construction. States correspond to a finite number of particles as in the case of BEC (Bose-Einstein condensate) experiments, in traps are normal states. The equilibrium state of a free Bose gas in infinite space at finite density and temperature is a singular state. As an application in quantum information theory it results that, in the class of general covariant cloners, the optimal cloner can not have a singular component and it is purely normal. There is a singular state which can not be represented in the sense of the Gelfand-Pettis integral by an orthonormal basis. In the formalism of quantum worlds, the process of measurement only changes the information and no collapse of wave packets occurs.

\begin{acknowledgement}{}
I would like to thank Professors Grigori G. Amosov, Vsevolod Zh. Sakbaev, and Oleg G. Smolyanov for motivating to write this paper.
\end{acknowledgement}

\end{justify}


\end{document}